\documentclass[msom,nonblindrev]{informs4}
\DoubleSpacedXI

\usepackage[linktocpage,colorlinks,linkcolor=blue,anchorcolor=blue,citecolor=blue,urlcolor=blue,pagebackref]{hyperref}
\hypersetup{}
\renewcommand*{\backref}[1]{\ifx#1\relax \else ~ \fi}

\usepackage{amsmath,amssymb,amsfonts}

\usepackage{natbib}
 \bibpunct[, ]{(}{)}{,}{a}{}{,}%
 \def\bibfont{\small}%

\usepackage{anyfontsize}
\usepackage{rotating}
\usepackage{fancyvrb}
\usepackage{endnotes}
\usepackage{rotating}
\usepackage{cleveref}
\crefname{lemma}{Lemma}{Lemmas}  
\crefname{proposition}{Proposition}{Propositions}

\usepackage{multirow}
\usepackage{tabu, booktabs}
\usepackage{makecell}
\usepackage{subfigure}
\usepackage[table,xcdraw]{xcolor}
\usepackage{bbm}

 \usepackage{amssymb}
\usepackage{tabularx}
\usepackage{booktabs}
\usepackage{mathtools}
\usepackage{float}
\usepackage{pifont} 

\TheoremsNumberedThrough     
\ECRepeatTheorems

\EquationsNumberedThrough    

\begin{document}

\RUNAUTHOR{Zhao et al.} 

\RUNTITLE{Improving Human Supervision of Algorithms}

\TITLE{A Simple Solution to Improving Human Supervision of Algorithms: Evidence from Smart Vending}

\ARTICLEAUTHORS{%
    \AUTHOR{Minda Zhao}
    \AFF{H. Milton Stewart School of Industrial and Systems Engineering, Georgia Institute of Technology}

    \AUTHOR{Brian Rongqing Han}
    \AFF{Gies College of Business, University of Illinois Urbana-Champaign}

    \AUTHOR{Xin Chen}
    \AFF{H. Milton Stewart School of Industrial and Systems Engineering, Georgia Institute of Technology}

    \AUTHOR{Tao Zhu}
    \AFF{Shenzhen Fengyi Technology Co., Ltd.}
}

\ABSTRACT{
\textbf{\textit{Problem definition:}}
Organizations increasingly deploy autonomous artificial intelligence (AI) systems for operational decisions, such as inventory replenishment. Yet fully granting override rights can degrade performance due to human bias and noise, while prohibiting them may overlook valuable private information. This raises a key question: How should override rights be structured to improve human supervision of autonomous AI?
\textbf{\textit{Methodology/results:}}
We propose a constrained override policy that limits overrides per decision episode to enable selective filtering that prioritizes high-value overrides. We tested it through a randomized field experiment with $553$ workers at a major Chinese smart vending machine retailer that manages more than $59,000$ machines and $4,000$ SKUs. Workers were assigned to no overrides, free overrides, or a two-per-machine limit on downward overrides. Free overrides reduce inventory by $1.95\%$ but also cut sales by $1.19\%$. Constrained overrides reduce inventory by $1.28\%$ without harming sales, as workers select better SKUs to override, confirmed via local average treatment effects. Gains are largest for experienced workers, high-incentive SKUs, and growth-stage SKUs. A simulated personalized policy further increases sales probability by $9.1\%$.
\textbf{\textit{Managerial implications:}}
Academics gain novel insights from the causal effects of discretion design in human-supervised AI, emphasizing selective filtering to enhance decision quality. Managers can benefit from a scalable, low-cost policy for operations such as retail, logistics, and resource planning, reducing excess inventory without sales loss while harnessing private human information, with no need for algorithmic redesign, information customization, or additional training.
}

\KEYWORDS{human-AI collaboration; private information; algorithm override; field experiment; local average treatment effect}

\maketitle

\section{Introduction}
\label{section: introduction}

Data-driven decision-making is transforming businesses by automating complex and frequent tasks previously reliant on human judgment \citep{brynjolfsson2016rapid, bar2024helping}. Algorithms and artificial intelligence (AI) systems\endnote{Throughout this paper, we use the terms \emph{algorithm} and \emph{AI} interchangeably.} are now routinely employed in critical decision-making domains such as drug discovery \citep{lou2021ai} and digital content recommendation \citep{chen2024impact}. Despite substantial advancements in algorithmic capabilities, many organizations continue to grant frontline employees discretionary power to override algorithmic recommendations. Such override rights reflect a widespread managerial belief that human workers possess contextual or local information unavailable to algorithms, which can potentially enhance decisions when algorithms falter.

However, the causal impact of human overrides on algorithmic performance remains ambiguous. While a considerable body of literature (e.g., \citealt{ibanez2018discretionary,kesavan2020field, liang2024examining}) indicates that human overrides can degrade the performance of autonomous algorithmic decisions, highlighting the risks of \textit{human bias}, there is also evidence suggesting that overrides can sometimes improve outcomes. For example, overrides have been shown to be beneficial when humans utilize domain-specific insights not captured by the algorithm \citep{kesavan2025profit,wang2025power}. Thus, despite significant advancements in automation, human operators can still possess critical \textit{private information} that enhances the quality of decision-making. These mixed findings raise an essential design question: \textit{How should override rights be structured to improve human-AI collaboration when AI systems operate autonomously?}

This question defines an important regime in human-AI collaboration, where autonomous systems make routine decisions while human supervisors primarily execute and occasionally intervene through discretionary overrides. This setup is increasingly common in operational contexts such as retail, logistics, and resource planning, where AI handles frequent, moderate-stakes tasks and humans step in selectively. In these settings, neither humans nor algorithms consistently outperform the other; effectiveness depends on the nature of the decision \citep{brynjolfsson2017can}, the allocation of authority \citep{athey2020allocation}, and how discretion is structured within the organization \citep{camerer2018artificial}. Despite its practical relevance, the design of override rights, specifying who may intervene, under what conditions, and to what extent, remains theoretically underdeveloped and lacks empirical guidance. This contrasts with a different regime in which AI serves as an assistant: algorithms provide input, but humans make the final decision in every case. It emerges more commonly in high-stakes domains such as loan approvals \citep{lu20251+, wang2025power} or medical diagnosis \citep{jussupow2021augmenting, lebovitz2022engage}. Recent research has explored how algorithm design and presentation improve outcomes by enhancing user trust through modifiable outputs \citep{dietvorst2018overcoming}, tailoring advice to expert needs \citep{wolczynski2024value}, prompting reflective judgment via diagnostic support \citep{abdel2023ai}, and mitigating trust-eroding bias amplification in feedback loops \citep{hu2024human}. 

Designing effective override policies for autonomous AI systems raises two distinct challenges. First, override decisions are inherently endogenous. Evaluating their effectiveness requires a counterfactual comparison between the actual override and the algorithm’s original recommendation, yet only one of these is observed. As a result, causal assessment typically relies on randomized experiments that induce exogenous variation in override behavior \citep{kesavan2020field, sun2022predicting} or on imputation using observational data \citep{ibanez2018discretionary, wang2025power}, which is more tractable when the outcome variables are binary. Second, override discretion is difficult to manage precisely because it is internal: workers must judge when intervention is warranted, despite cognitive limits \citep{fugener2022cognitive}, biased self-assessment \citep{de2023your}, and limited feedback \citep{kwon2022human}. These constraints make it difficult to assess, influence, or improve discretionary behavior directly.

In this paper, we address these challenges and show that the causal value of human overrides depends critically on the structure of discretion. We introduce a \textit{constrained override} policy that limits the number of overrides workers can make per decision episode. This simple cardinality constraint does not improve judgment per se, nor does it require identifying useful overrides in advance. Instead, it changes the decision environment: when override opportunities are scarce, workers are forced to prioritize. We term this the \textit{selective filtering mechanism}, a process that screens out noisy overrides while preserving those with stronger private information. By converting unconstrained discretion into a filtering task, the policy transforms noise-prone override behavior into a scalable and effective mechanism for improving human-algorithm collaboration.

Our study is motivated by a collaboration with Fengyi Technology, one of the largest smart vending machine retailers in China. By the end of July $2025$, the company operated over $170,000$ smart vending machines across major cities. Each day, Fengyi uses centralized machine learning algorithms to generate replenishment recommendations, specifying the quantity of each stock-keeping unit (SKU\endnote{We distinguish between “SKU” and “item.” An SKU refers to a distinct product type (e.g., Coca-Cola $300$ml), while an item denotes a single unit of that product. Each machine may carry multiple SKUs, with several items in stock per SKU.}) for every machine in the network. These decisions are executed by local workers, each responsible for replenishing $80$-$100$ machines using inventory from nearby warehouses. Workers are expected to follow algorithmic suggestions but retain discretion to increase stock for SKUs they expect to sell out.

A key concern arises around downward overrides, instances where workers reduce the algorithm’s suggested quantity. In a $2022$ pilot study, Fengyi found that granting unrestricted override rights led to a measurable decline in sales, despite workers' incentives being aligned with both sales and inventory efficiency. That is, workers receive bonuses for achieving high sales but face penalties for excess inventory that remains unsold. At the same time, firm managers acknowledged that workers often possess valuable local information, such as customer preferences, store-level events, or nearby competition, that the algorithm cannot observe. This creates a core operational dilemma: how to preserve the benefits of human overrides while mitigating the risks of performance loss. The company sought a scalable policy that could structure discretionary power to better filter high-value human overrides.

To formalize our proposed solution, we develop a stylized Bayesian model \citep{arrow1973education, spence1973job} in which workers receive private signals about the potential benefit of overriding each SKU. These signals are both noisy and biased, reflecting imperfect human judgment. Under a free override policy, workers act on all signals exceeding a subjective threshold, resulting in a mix of low- and high-quality decisions. A constrained override policy, by limiting the number of allowable overrides, induces endogenous selection based on signal strength: workers are forced to rank-order override opportunities and act only on the most compelling ones. This selective filtering process raises the average quality of overrides without requiring the ex-ante identification of useful cases.

While not intended as a comprehensive model of human behavior, our framework serves to illustrate the selective filtering hypothesis. The idea that structural constraints can improve decision quality is well-established. Prior work has shown that limiting options or reducing cognitive load can enhance performance, whether through choice set simplification \citep{iyengar2000choice, iyengar2010choice} or cognitive bandwidth constraints \citep{ariely2001timely, evans2019optimal, shanmugam2020decision}. We demonstrate that these classical insights also apply to human supervision of algorithms, where a simple cardinality constraint transforms override discretion into a prioritization task. To establish the robustness of this mechanism, we further develop a rational inattention formulation \citep{simon1955behavioral, sims2003implications}, in which workers further endogenize a cost for information acquisition in decision-making. Both formulations yield the same core logic: constraints filter noise by concentrating attention on the most valuable overrides. These results suggest that while different behavioral models offer plausible microfoundations, the effectiveness of constrained override does not depend on the specific structure of human decision-making.

Guided by the theoretical prediction, we designed a randomized field experiment involving $553$ workers managing over $59,000$ vending machines and approximately $4,000$ SKUs. Workers were randomly assigned to one of three override policies that {differed} only in the extent of downward discretion: (1) \textit{no override}, which {prohibited} reductions to algorithmic recommendations; (2) \textit{free override}, which {permitted} unrestricted reductions; and (3) \textit{constrained override}, which {limited} each worker to reducing at most two SKUs per machine. All groups {retained} full discretion to increase replenishment quantities, and in rare cases, such as warehouse stockouts, additional reductions {were} allowed for all groups. A key feature of our design is that randomization occurs at the \textit{worker level}. This departs from prior studies that randomize at the SKU or decision level \citep{kesavan2020field, liang2024examining}, which fix override opportunities and thus preclude endogenous prioritization. In contrast, our design allows workers to allocate discretion across SKUs, enabling us to test whether constrained override induces workers to prioritize their strongest override signals, a central implication of the selective filtering mechanism.

We estimate the causal effects of override policies using a worker-machine-day panel and a difference-in-differences specification, comparing the constrained and free override groups to the no-override baseline. We measure performance outcomes at the machine level based on SKU-level sales and inventory between consecutive replenishments. The free override policy reduces inventory by $1.95\%$ ($p<0.01$) but also lowers daily sales by $1.19\%$ and sales probability by $1.38$ percentage points ($p<0.01$). The negative impact of unconstrained discretion aligns with prior work showing that unstructured human overrides can degrade algorithmic performance \citep{kesavan2020field, liang2024examining, wang2025power}. In contrast, constrained override reduces inventory by $1.28\%$ ($p<0.01$) without affecting sales quantity or probability, suggesting more selective and effective override behavior. These results are robust to alternative scalings and demand irregularities, reinforcing the conclusion that override policy shapes override quality.

We then examine how override policies directly influence worker behavior. The free override policy leads to broad-based reductions: workers replenish fewer SKUs ($-0.55$, $p < 0.01$), lower total quantities ($-3.82$, $p < 0.01$), and apply smaller reductions per SKU ($0.12$). The constrained override policy, in contrast, narrows the scope of overrides: workers override significantly fewer SKUs ($0.36$ vs.\ $1.96$, $p < 0.01$), but implement larger reductions per SKU ($0.75$, $p < 0.01$). While upward overrides increase modestly, the changes are insufficient to offset the focused nature of constrained downward overrides. The evidence indicates a more selective pattern of overrides: the constrained override policy reduces the frequency of overrides and refocuses discretion on a smaller set of high-confidence decisions.

To directly test the selective filtering mechanism, we adopt a local average treatment effect (LATE) framework \citep{imbens1994identification}, using randomized override policy assignment as an instrument for SKU-level override decisions. The LATE precisely captures the causal effect of override decisions for the subset of SKUs whose status is altered by the policy. Comparing the constrained override policy to the no override policy, we estimate a LATE of $-3.45$ items in inventory ($p < 0.01$), with no significant change in sales quantity or probability, evidence of precise inventory reductions. In contrast,  the free override policy yields smaller inventory reductions ($-1.26$) but significantly decreases the probability of sale by $10.55$ percentage points ($p < 0.01$), indicating that unconstrained discretion results in lower-quality overrides. Robustness checks further reinforce this mechanism: intent-to-treat estimates show a consistent directionality; stricter definitions of override (e.g., zeroing out SKUs) yield larger LATEs; upward overrides exhibit no significant effects; and constrained LATE estimates remain stable after controlling for SKU features such as price and discount status.

To translate the mechanism into practice, we examine when constrained overrides are most effective. We focus on three dimensions where the value of human discretion is likely to vary: worker experience, SKU-level incentive alignment, and algorithmic uncertainty. The constrained override policy yields the largest performance gains among senior workers, who reduce inventory by $1.72\%$ ($p < 0.01$), and for growth-stage SKUs, where inventory declines by $1.15\%$ ($p < 0.05$) without reducing sales. It also performs well for top-selling SKUs, reducing inventory by $1.9\%$ ($p < 0.01$); these products are more tied to compensation, aligning worker incentives with performance. In contrast, the free override policy reduces sales by $1.39\%$ and inventory by $2.85\%$ among senior workers ($p < 0.01$) and increases inventory for low-sales SKUs ($+1.16\%$, $p < 0.1$) while depressing sales across the board. These patterns suggest that the constrained override policy is most beneficial when workers possess stronger private signals, are more incentive-aligned, or when algorithmic forecasts are less reliable. At the same time, the consistent performance losses under the free override policy indicate that full discretion amplifies noise, even among experienced workers and high-value SKUs.

Finally, we develop a data-driven refinement that adaptively filters human overrides at the instance level. Using override decisions from the constrained group as a behavioral benchmark, we train a logistic regression model to estimate the likelihood that a given override reflects high-quality private information. This model is then used to simulate a personalized policy: override instances from the free group are accepted if their predicted probability exceeds a threshold and are rejected otherwise. The resulting \textit{personalized override} group exhibits a heavier tail in override volume per machine and a significantly higher sales probability, an increase of $9.1$ percentage points ($p < 0.01$) compared to the original constrained group.

Before proceeding, we summarize our main contributions in three aspects:
 
\begin{itemize}
    \item \textit{We propose a simple override policy, {constrained override}, for the regime of autonomous AI with human supervision to improve human-AI collaboration}. By imposing a cardinality constraint, the policy activates a {selective filtering mechanism} that transforms override discretion into a prioritization task, filtering noise and surfacing high-value human input. While classical behavioral models offer plausible microfoundations, the effectiveness of the constrained override policy does not depend on the specific structure of human decision-making.
    \item \textit{We provide causal evidence on override policies through a randomized field experiment}. While the free override policy reduces both inventory and sales, consistent with prior findings on the risks of unstructured human discretion, the constrained override policy reverses this pattern, improving inventory efficiency without reducing sales. A local average treatment effect framework confirms that these gains reflect improved override precision, offering causal support for the selective filtering mechanism.
    \item \textit{We demonstrate that the constrained override policy is a scalable, low-cost intervention for improving human supervision of autonomous AI}. It requires no algorithm redesign, no information customization, and no identification of private signals. The approach is easy to implement, generalizes across operational contexts, and enables firms to enhance decision quality without added complexity.
\end{itemize}
 
The rest of the paper is structured as follows: Section~\ref{section: literature} reviews related literature; Section~\ref{section: experiment} describes the empirical context, key hypotheses, and experimental design; Section~\ref{section: results} estimates the causal effects of override policies on performance and behavior; Section~\ref{section:selective filtering} tests the proposed mechanism; and Section~\ref{section:betteroverride} offers practical guidance through heterogeneous effects.

\section{Literature Review}
\label{section: literature}

Prior research offers mixed evidence on the performance implications of human overrides to algorithmic recommendations. A substantial body of work finds that discretionary overrides degrade outcomes: manual price changes reduce revenue \citep{phillips2015effectiveness, liang2024examining}, task reassignments lower productivity \citep{ibanez2018discretionary}, and forecast revisions increase error rates \citep{khosrowabadi2022evaluating}. Experimental studies similarly show that override discretion introduces inconsistency and noise \citep{lehmann2022risk, snyder2025algorithm}, motivating calls for strict algorithm adherence \citep{kawaguchi2021will, caro2023believing}. In contrast, other work finds that override discretion can improve performance when humans possess relevant private information. For example, \citet{kesavan2020field} shows that workers improve inventory decisions for growth-stage SKUs, while \citet{kwon2022human}, \citet{kesavan2025profit} and \citet{wang2025power} find that forecasters enhance accuracy by addressing algorithmic blind spots. \citet{fildes2009effective}, \citet{stip2024reluctant}, and \citet{tan2020behavioral} further demonstrate that targeted overrides can correct systematic algorithmic errors. We reconcile these findings by showing that override effectiveness depends not on frequency but on the structure of discretion. Using field-based causal evidence, we demonstrate that a constrained override policy improves decision quality by inducing prioritization, whereas unconstrained discretion amplifies noise.

Behavioral research offers foundational insight into why override discretion often leads to suboptimal outcomes. Bounded rationality and overconfidence are well-documented sources of systematic error \citep{su2008bounded, ren2013overconfidence, ren2017overconfident}, and persistent deviations from optimality have been observed in inventory and transshipment contexts due to biased self-assessment and miscalibrated expectations \citep{li2017overconfident, li2019overconfident, li2021transshipment}. These effects are amplified by limited feedback and misaligned incentives \citep{feng2017modeling, ovchinnikov2015compete, quiroga2019behavioral}. Override behavior is also shaped by cognitive constraints: \citet{fugener2022cognitive} finds that humans often misjudge whether to delegate to algorithms, even when algorithmic outputs are demonstrably more accurate. Related studies show that decision quality deteriorates when individuals face excessive choice \citep{iyengar2000choice}, operate under time pressure \citep{ariely2001timely}, or must allocate limited attention across competing alternatives \citep{evans2019optimal, shanmugam2020decision}. Algorithm aversion further reduces override effectiveness: individuals penalize machine errors more heavily than human ones \citep{dietvorst2015algorithm}, and override behavior is shaped by perceived control and transparency \citep{dietvorst2018overcoming, dietvorst2020people}. Additional frictions stem from empathy gaps \citep{luo2019frontiers}, social influence \citep{liu2023algorithm}, and subjective overreliance \citep{de2023your}. Building on this foundation, our contribution is to show that structural constraints can improve override quality by reshaping discretion in ways that align with these behavioral limits. Unlike prior work that seeks to suppress overrides or correct judgment directly, we offer a design-based solution that accommodates multiple behavioral microfoundations, such as Bayesian learning or rational inattention, without relying on any specific cognitive model.

Recent work has begun to examine when override discretion adds value. Most approaches follow a two-stage logic: they first detect cases where human overrides improve outcomes and then modify algorithmic or organizational responses accordingly. For example, \citet{kesavan2020field} identify product categories where overrides improve inventory efficiency, while \citet{sun2022predicting} develop adaptive systems that respond to human-identified product features. Similarly, \citet{stip2024reluctant} and \citet{fildes2009effective} show that override value increases when algorithmic errors are predictable. These solutions rely on behavioral modeling, domain-specific detection, or retrospective validation, making them difficult to scale across settings. In contrast, we offer a structural solution that induces prioritization at the point of discretion. Without requiring behavioral labels or predictive models, our approach filters high-value overrides through a simple constraint that reallocates attention across competing options.

A complementary body of work examines how to structure human-AI collaboration in ways that preserve human judgment while ensuring algorithmic consistency. Comparative studies in knowledge work, such as editorial decisions, reveal systematic advantages and limitations on both sides \citep{peukert2024editor}. In high-stakes domains like healthcare and finance, algorithms serve as advisors, while humans retain decision authority. Research in this regime focuses on interface design, transparency, and explanation quality to improve adherence and performance \citep{green2019disparate, lebovitz2022engage, grand2024best, balakrishnan2025human, lu20251+}. In contrast, many operational settings, including ours, assign decision rights to autonomous systems, with humans intervening only occasionally through override discretion. Work in this context has explored override shaping through forced prompts \citep{cao2024impact}, learned delegation rules \citep{fugener2021will}, and hybrid control frameworks that incorporate human input into algorithmic models \citep{ibrahim2021eliciting, chen2022algorithmic, cui2015information}. Others investigate whether explanation design improves override quality \citep{bastani2025improving}. While these approaches are promising, they often depend on dynamic learning systems, behavioral personalization, or interface-level engineering. We offer a lightweight alternative without modifying the algorithm or customizing the interface.

\section{Research Context, Theoretical Rationale, and Experiment Design}
\label{section: experiment}
We first describe Fengyi Technology's business setting regarding smart vending machines. Next, we introduce our proposed solution and outline the hypothesis. Finally, we design a randomized field experiment to evaluate its effectiveness.

\subsection{Empirical Context}
\label{subsection: empirical context}

Fengyi Technology is a leading self-service retail operator in China specializing in smart vending machines. As of July $2025$, the company’s service network spanned more than $70$ metropolitan areas, managed approximately $170,000$ machines, and served over $75$ million consumers nationwide. Each vending machine offers ready-to-consume products such as breakfast items, snacks, dairy, and beverages, with typical assortments ranging from $40$ to $50$ SKUs. These machines are deployed in diverse offline environments such as office buildings, hotel lobbies, and outdoor public areas (see also Figure~\ref{fig: smart vending machine} in \Cref{appendix: exp} of the online appendix). All machines are connected to a centralized server via the internet, enabling real-time data collection and system-wide algorithmic management. This infrastructure allows the firm to dynamically forecast demand, monitor sales, and remotely update replenishment recommendations on a daily basis.

A core operational task at Fengyi Technology is inventory replenishment, carried out by a large network of local workers. Each worker restocks $80$ to $100$ machines and typically services $10$ to $15$ per day using inventory from a nearby warehouse. To support this process, the company runs a centralized algorithm that provides daily SKU-level replenishment recommendations for each machine. These recommendations are derived from a two-stage system: a machine learning model first forecasts product-level demand based on historical sales, current inventory, and contextual features (e.g., day of week, weather, location type), and then a replenishment policy optimizes the replenishment quantity based on those forecasts. The algorithm follows a standard periodic review structure, specifically, an $(R, T)$ policy \citep{hadley1963analysis}, which determines restocking quantities by topping inventory back up to a target level $R$ every $T$ days. While the specific details of the forecasting and control models are not the focus of this study, they remained fixed throughout our experiment to provide a consistent baseline. A mobile app delivers recommendations that show, for each SKU on each machine, the recommended replenishment quantity, for example, ``place $5$ cans of Coke on Machine A$124$." The app also displays supporting information such as current inventory levels, historical daily sales, and warehouse stock availability, enabling workers to interpret and potentially override the suggestions in the field. (Please also refer to \Cref{fig: screenshots for different groups}  in the online appendix \Cref{appendix:sec3} for screenshots of the app interface.)

The company manages a performance-based incentive system that evaluates workers based on both sales and inventory. Workers receive bonuses for achieving high sales but face penalties for excess inventory that remains unsold. Therefore, stocking more increases the chance of capturing demand but also raises the risk of spoilage or overstock penalties, especially for slow-moving or perishable items. Incentivized by bonuses and penalties, workers may override the algorithmic suggestion when they believe the algorithm falls short. Based on anecdotal evidence, some workers develop informal relationships with consumers and gain insights into product preferences that are not recorded in data. In addition, workers are sensitive to nearby competition, such as the presence of a convenience store offering discounts on similar items, which can influence demand in ways the centralized system does not capture \citep{fisher2018competition}.

To test whether private, context-specific knowledge could improve replenishment performance, Fengyi Technology conducted a small-scale pilot in early $2022$. In this trial, selected workers operated under a free override policy that allowed them to override algorithm-suggested replenishment quantities without restrictions. The intervention led to a decline in overall sales, primarily due to widespread lost sales across SKUs caused by excessive downward overrides. These findings suggest that although workers may possess valuable private information, a free override policy can also introduce substantial noise and compromise performance. In response, the company applied a no (downward) override policy that prohibited workers from reducing the algorithm’s suggested replenishment quantity for any SKU but retained full discretion to increase it. Accordingly, workers are instructed to treat the algorithm’s recommendation as a minimum threshold, except in rare cases such as warehouse shortages. For example, if the system recommends restocking $10$ units of bottled milk but only $6$ are available at the warehouse, the worker may reduce the replenishment quantity to match available stock.

\subsection{Solution and Key Hypothesis}
\label{subsection: mechanism}

To address the limitations of both the no override and free override regimes, we propose a simple policy: \textit{constrained override}. This policy imposes a cardinality constraint that limits each worker to overriding at most $K$ SKUs per machine. As illustrated in Figure~\ref{fig: mechanism}, varying levels of discretion lead to different override behavior and consequences. Under all policies, workers may still override in extreme cases (e.g., warehouse stockouts). The no override policy provides no discretionary power beyond such exceptions. The free override policy grants full discretion, resulting in a mix of ``good'' overrides (which reduce inventory without harming sales) and ``bad'' overrides (which lead to lost sales). Positioned in between, the constrained override does not simply yield a random subset of $K$ overrides from the free override case, as workers are incentivized to improve inventory efficiency. Instead, by limiting the number of allowable downward overrides, the constrained policy motivates \textit{selective filtering}: workers prioritize SKUs where they are more confident in their judgment. This ``sorting effect" increases the proportion of ``good" overrides.

\begin{figure}[htbp]
    \FIGURE
    {\includegraphics[width=0.7\textwidth]{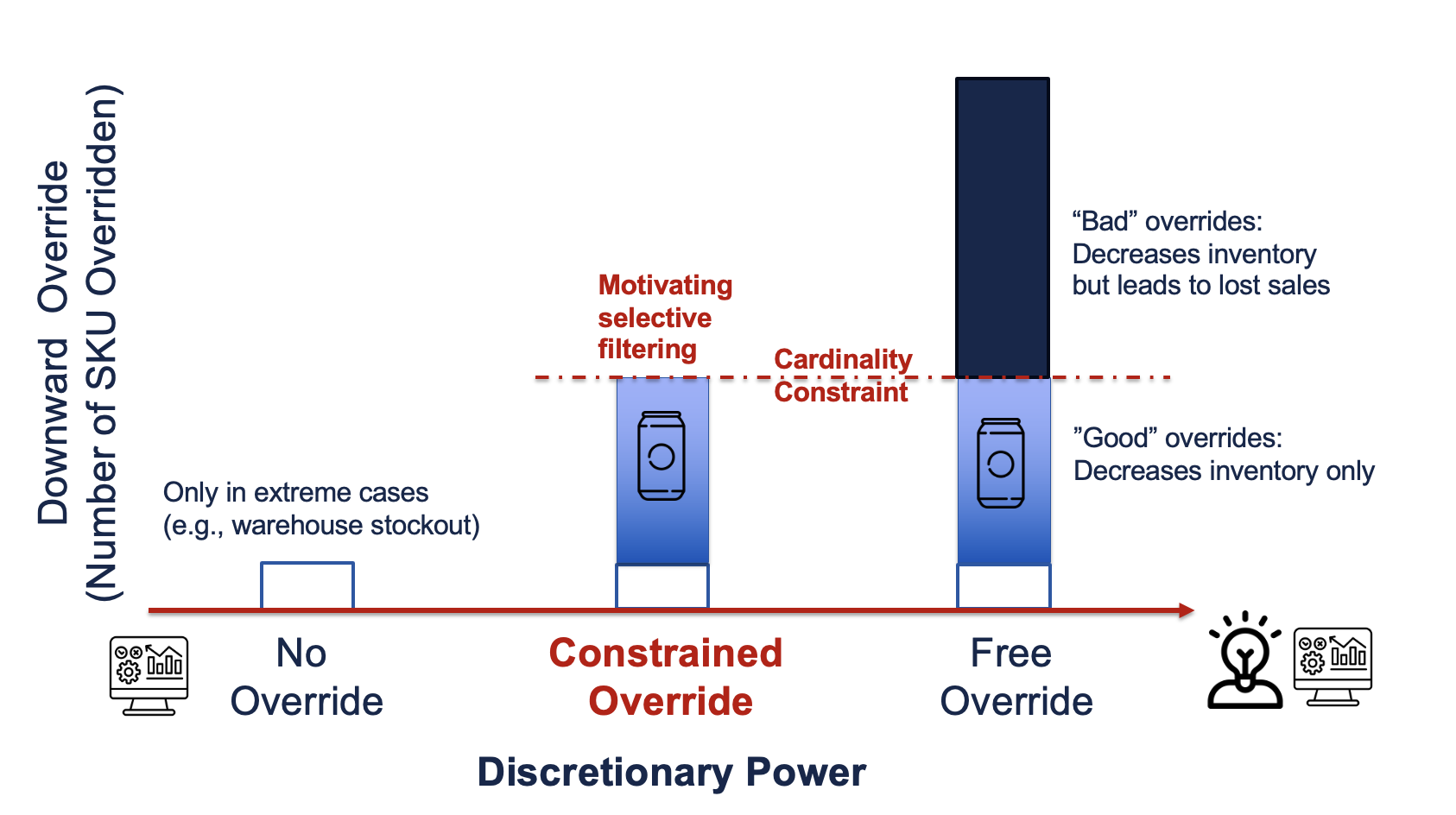}}
    {Illustration of the Selective Filtering Hypothesis. \label{fig: mechanism}}
    {}
\end{figure}

To formalize this intuition, we develop a stylized Bayesian model in which workers select which SKUs to override based on noisy private signals and estimation bias. The model clarifies how a cardinality constraint induces a sorting effect, where workers prioritize SKUs with stronger signals. This mirrors classic results in signal-based screening and statistical discrimination \citep{arrow1973education, spence1973job}, where decisions based on noisy signals can still improve aggregate outcomes. 

Specifically, we consider a single decision episode when replenishing a machine stocked with $N$ SKUs, indexed by $k = 1, \dots, N$. Replenishment occurs for all SKUs based on the algorithm's suggestions, yielding a baseline performance outcome $v_k^s$ for each SKU $k$, observable to both the firm and the worker. If the worker overrides by reducing the algorithm's suggested replenishment level for a selected SKU $ k $, the actual performance becomes $ v_k^a = v_k^s + \eta_k $, where $ \eta_k \sim \mathcal{N}(\mu, \sigma_\eta^2) $ with $ \mu < 0 $ representing SKU-specific local knowledge available only to the worker, such as nearby competition or contextual factors not captured by the algorithm. Here, we can also think of $\eta_k$ as the individual causal effect of override. A positive $\eta_k$ indicates a ``good" override (private information), while a negative $\eta_k$ indicates a ``bad" override (human bias). The overall performance across all SKUs is thus $ \sum_{k=1}^N v_k^a = \sum_{k=1}^N v_k^s + \sum_{k \in A} \eta_k $, where $A$ is the set of overridden SKUs. The true distribution reflects a mild negative bias ($\mu < 0$), consistent with empirical evidence that humans often naively overweight noisy private signals, leading to a systematic negative impact of overrides \citep{kesavan2020field}.

The worker does not observe \( \eta_k \) directly but instead receives a noisy private signal \( s_k = \eta_k + \epsilon_k \), where \( \epsilon_k \sim \mathcal{N}(0, \sigma^2) \) represents independent judgment noise. We assume that workers believe \( \eta_k \sim \mathcal{N}(0, \sigma_\eta^2) \), reflecting potential miscalibration in perceiving the value of their local knowledge. We further assume that \( \eta_k \) and \( \epsilon_k \) are independent across SKUs. The worker uses \( s_k \) as a noisy proxy for the override’s value and makes decisions accordingly. Each time a worker replenishes a machine, they choose a subset \( A \subseteq \{1, \dots, N\} \) of SKUs to override to maximize the expected cumulative value of override, \( \sum_{k \in A} \mathbb{E}[\eta_k \mid s_k] \), subject to a policy-specific constraint on \( A \). Thus, the override decision is fully endogenized: the worker selects SKUs based on Bayesian inference and under different override policies.

The additive signal structure and normality assumptions are standard in Bayesian models of noisy information (e.g., \citealt{spence1973job}). We assume aligned incentives: workers override to improve performance, consistent with compensation and penalties tied to sales and inventory (Section~\ref{subsection: empirical context}). These simplifications help isolate the core mechanism, how override policy shapes the interaction between local knowledge (\( \eta_k \)) and judgment noise (\( \epsilon_k \)). Our setup also parallels classic delegation models (e.g., \citealt{dessein2002authority, alonso2008optimal}), but shifts focus from incentive misalignment to selective filtering under noise.

\noindent \textbf{No Override Policy:} The worker follows all algorithmic suggestions without exceptions. Since neither local knowledge (\( \eta_k \)) nor judgment noise (\( \epsilon_k \)) is used, this policy eliminates human error but also ignores potential benefits from local knowledge. Therefore, the override set is $A^{\text{No}} = \emptyset$.

\noindent \textbf{Free Override Policy:} The worker may override any SKU based on the perceived signal \( s_k = \eta_k + \epsilon_k \). Given the miscalibration of workers' prior belief $ \eta_k \sim \mathcal{N}(0, \sigma_\eta^2)$, the posterior expectation is $\mathbb{E}[\eta_k \mid s_k] = \frac{\sigma_\eta^2}{\sigma_\eta^2 + \sigma^2}s_k$. Then, the worker's override set is $A^{\text{Free}} = \{k: s_k > 0\}$.

\noindent \textbf{Constrained Override Policy:} The worker maximizes the expected override value by selecting the top $K$ SKUs with the highest signals among the set $\{k: s_k > 0\}$, since the posterior expectation $\mathbb{E}[\eta_k \mid s_k]$ increases strictly in \( s_k \). Formally, the override set is $A^{\text{Constrained}} = \{k: s_k > 0, s_k \ge s_{(K)}\}$, where $s_{(1)} \ge s_{(2)} \ge \dots \ge s_{(N)}$ are the ordered signals. To simplify analytical derivations, we approximate $A^{\text{Constrained}} \approx \{k: s_k \ge s_{(K)}\}$ by dropping the threshold $s_k > 0$ with a negligible approximation error when $K \ll N$.

\begin{lemma}[Selective Filtering Mechanism of Private Information]\label{lemma:conditional_positive} 
For any SKU $k\in[N]$, the expected override value conditional on being selected under the constrained override policy is strictly positive if the following condition holds:
\begin{equation*}
    \mathbb{E} \left[ \eta_k \mid k \in A^{\text{Constrained}} \right] > 0 \quad \text{if} \quad a(K) > -\mu \cdot \frac{\tau}{\sigma_\eta^2}.
\end{equation*}
Here $\tau^2 \coloneqq \sigma_\eta^2 + \sigma^2$ and $a(K) \coloneqq \frac{1}{K} \sum_{j = 1}^{K} \mathbb{E}\left[U_{(j)}\right]$, with $U_{(1)} \ge U_{(2)} \ge \dots\ge U_{(N)}$ representing the order statistics of $\{U_1, \dots, U_N\}$, defined by $U_k \coloneqq (s_k - \mu)/\tau$.
\end{lemma}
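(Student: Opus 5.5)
The plan is to use exchangeability across SKUs together with the linear Gaussian regression of $\eta_k$ on $s_k$ under the \emph{true} distribution. Under the true model, $(\eta_k, s_k)$ is bivariate normal with $\mathbb{E}[\eta_k]=\mu$, $\mathbb{E}[s_k]=\mu$, $\mathrm{Var}(s_k)=\tau^2$ and $\mathrm{Cov}(\eta_k,s_k)=\sigma_\eta^2$. The true conditional mean is therefore
\[
\mathbb{E}[\eta_k\mid s_k]=\mu+\frac{\sigma_\eta^2}{\tau^2}(s_k-\mu)=\mu+\frac{\sigma_\eta^2}{\tau}U_k .
\]
This is distinct from the worker's miscalibrated posterior, which only enters through the ranking. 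Because the selection event $\{k\in A^{\text{Constrained}}\}=\{s_k\ge s_{(K)}\}$ is measurable with respect to $(s_1,\dots,s_N)$, and $\eta_k$ is independent of $s_j$ for $j\neq k$, the tower property gives
\[
\mathbb{E}[\eta_k\mid k\in A^{\text{Constrained}}]=\mu+\frac{\sigma_\eta^2}{\tau}\,\mathbb{E}[U_k\mid k\in A^{\text{Constrained}}].
\]
To justify this step, I will condition on the whole signal vector and note that $\mathbb{E}[\eta_k\mid s_1,\dots,s_N]=\mathbb{E}[\eta_k\mid s_k]$ by independence across SKUs.

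The next step is to compute $\mathbb{E}[U_k\mid k\in A^{\text{Constrained}}]$. The $U_k$ are i.i.d.\ standard normal, so they are exchangeable, and ties occur with probability zero. Hence $\mathbb{P}(k\in A^{\text{Constrained}})=K/N$. Moreover, $\sum_{j=1}^N U_j\mathbf{1}\{j\in A^{\text{Constrained}}\}=\sum_{j=1}^K U_{(j)}$ pathwise. Taking expectations and using exchangeability,
\[
N\,\mathbb{E}\bigl[U_k\mathbf{1}\{k\in A^{\text{Constrained}}\}\bigr]=\sum_{j=1}^K\mathbb{E}[U_{(j)}].
\]
Dividing by $\mathbb{P}(k\in A^{\text{Constrained}})=K/N$ yields $\mathbb{E}[U_k\mid k\in A^{\text{Constrained}}]=a(K)$.

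Combining the two displays,
\[
\mathbb{E}[\eta_k\mid k\in A^{\text{Constrained}}]=\mu+\frac{\sigma_\eta^2}{\tau}\,a(K).
\]
This is strictly positive exactly when $a(K)>-\mu\,\tau/\sigma_\eta^2$, which is the stated condition.

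The only delicate point is the conditioning step: keeping the true prior (mean $\mu$) separate from the worker's belief (mean $0$), and making sure the selection event depends on $\eta_k$ only through $s_k$. Both facts follow from the independence assumptions. I am working with the approximate set $A^{\text{Constrained}}\approx\{k:s_k\ge s_{(K)}\}$ as stated, so I will not handle the extra $s_k>0$ threshold. If it were retained, the same argument would go through with $a(K)$ replaced by the conditional mean of $U_k$ over the smaller selected set, which can only be larger.
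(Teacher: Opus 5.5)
Your proposal is correct and follows the paper's proof. Both use the true-prior Gaussian regression $\mathbb{E}[\eta_k\mid s_k]=\mu+\frac{\sigma_\eta^2}{\tau^2}(s_k-\mu)$, the law of total expectation, and the reparameterization $s_k=\mu+\tau U_k$ with $\mathbb{E}[U_k\mid k\in A^{\text{Constrained}}]=a(K)$ to reach $\mu+\frac{\sigma_\eta^2}{\tau}a(K)$; your tower-property conditioning on the full signal vector and your exchangeability argument with the pathwise sum just spell out two steps the paper only asserts (the linearity step and the uniform-rank claim).
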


We place all proofs in the online appendix (\Cref{appendix: bayesian}). Intuitively, \Cref{lemma:conditional_positive} shows that when the override constraint is imposed, workers focus on SKUs with the strongest signals. Since \( s_k \) is positively correlated with \( \eta_k \) by definition, this sorting raises the expected value of selected overrides, as long as \( K \) is sufficiently small (or equivalently, $a(K)$ sufficiently large). We now formally compare informational efficiency across policies by aggregating \( \eta_k \) in the selected sets.

\begin{proposition}[Average Causal Effect of Override Policies] \label{prop:filtering}
The total expected override values across the three override policies satisfy:
\begin{equation*}
    \mathbb{E}\left[\sum_{k \in A^{\text{Constrained}}} \eta_k \right]
> \mathbb{E}\left[\sum_{k \in A^{\text{No}}} \eta_k \right] = 0
> \mathbb{E}\left[\sum_{k \in A^{\text{Free}}} \eta_k \right]
\end{equation*}
if the model parameters satisfy 
\(
h\left( -\mu / \tau \right) < -\mu \cdot \tau / \sigma_\eta^2 < a(K),
\)
where \( h(x) \coloneqq \frac{\phi(x)}{1-\Phi(x)} \), and \( \phi \), \( \Phi \) denote the standard normal probability density function (PDF) and cumulative distribution function (CDF), respectively.
\end{proposition}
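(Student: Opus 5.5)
The plan is to prove the three parts of the chain separately. The middle equality is immediate, since $A^{\text{No}}=\emptyset$ gives an empty sum. For the other two comparisons, use that $(\eta_k,s_k)$ is bivariate normal with $\operatorname{Cov}(\eta_k,s_k)=\sigma_\eta^2$ and $\operatorname{Var}(s_k)=\tau^2$. Under the true distribution this gives
\[
\mathbb{E}[\eta_k\mid s_k]=\mu+\frac{\sigma_\eta^2}{\tau^2}(s_k-\mu)=\mu+\frac{\sigma_\eta^2}{\tau}U_k ,
\]
with $U_k\sim\mathcal{N}(0,1)$ i.i.d. across SKUs. Both selection rules depend only on the signals, and the pairs $(\eta_j,s_j)$ are independent across $j$. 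Hence, by iterated expectations, for any selection set $A$ that is a function of $(s_1,\dots,s_N)$,
\[
\mathbb{E}\Big[\sum_{k\in A}\eta_k\Big]=\mathbb{E}\Big[\sum_{k\in A}\mathbb{E}[\eta_k\mid s_1,\dots,s_N]\Big]=\mathbb{E}\Big[\sum_{k\in A}\Big(\mu+\tfrac{\sigma_\eta^2}{\tau}U_k\Big)\Big].
\]

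\emph{Free policy.} Here $A^{\text{Free}}=\{k:s_k>0\}=\{k:U_k>-\mu/\tau\}$. Write $c=-\mu/\tau>0$. Each SKU then contributes
\[
P(U_k>c)\Big(\mu+\tfrac{\sigma_\eta^2}{\tau}\,\mathbb{E}[U_k\mid U_k>c]\Big)=(1-\Phi(c))\Big(\mu+\tfrac{\sigma_\eta^2}{\tau}h(c)\Big),
\]
using the truncated-normal mean (inverse Mills ratio) $\mathbb{E}[U\mid U>c]=h(c)$. This is negative exactly when $h(c)<-\mu\tau/\sigma_\eta^2$, which is the left inequality of the hypothesis. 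Multiplying by $N$ gives $\mathbb{E}[\sum_{A^{\text{Free}}}\eta_k]<0$.

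\emph{Constrained policy.} Use the approximation $A^{\text{Constrained}}=\{k:s_k\ge s_{(K)}\}$ adopted in the text. Since $U_k$ is a strictly increasing affine map of $s_k$, this set indexes the top $K$ order statistics of the $U$'s, and ties have probability zero. Therefore
\[
\mathbb{E}\Big[\sum_{k\in A^{\text{Constrained}}}\eta_k\Big]=K\mu+\frac{\sigma_\eta^2}{\tau}\sum_{j=1}^K\mathbb{E}[U_{(j)}]=K\Big(\mu+\frac{\sigma_\eta^2}{\tau}a(K)\Big).
\]
This is positive iff $a(K)>-\mu\tau/\sigma_\eta^2$, the right inequality. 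Equivalently, one could invoke \Cref{lemma:conditional_positive}: by exchangeability $P(k\in A^{\text{Constrained}})=K/N$, and the total equals $N\cdot\frac{K}{N}\,\mathbb{E}[\eta_k\mid k\in A^{\text{Constrained}}]$. Note, though, that the threshold in the lemma is written as $-\mu\tau/\sigma_\eta^2$ in the proposition versus $-\mu\cdot\tau/\sigma_\eta^2$ in the lemma. I would rederive it directly as above rather than rely on the lemma's exact form.

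The main obstacle is justifying the conditioning step in the constrained case. The set depends on all signals jointly, so I need
\[
\mathbb{E}[\eta_k\mathbf{1}\{k\in A\}]=\mathbb{E}\big[\mathbb{E}[\eta_k\mid s_k]\mathbf{1}\{k\in A\}\big].
\]
This holds because $\eta_k$ is independent of $(s_j)_{j\ne k}$ given $s_k$. With that in place, the sum over selected indices equals the sum of the top $K$ values of $\mu+\frac{\sigma_\eta^2}{\tau}U_k$. A secondary point is the approximation that drops the constraint $s_k>0$, which I will state explicitly as an assumption inherited from the model setup.
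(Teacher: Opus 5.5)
Your proof is correct and follows the paper's argument: the empty sum for no override, the linear posterior mean with the inverse Mills ratio for the free policy (your $1-\Phi(c)$ is the paper's $\Phi(\mu/\tau)$), and the top-$K$ order-statistic total $K\bigl(\mu+\frac{\sigma_\eta^2}{\tau}a(K)\bigr)$ for the constrained policy, where your iterated-expectation justification just makes rigorous the step the paper attributes to the law of total expectation. The threshold discrepancy you flag does not exist, since $-\mu\cdot\tau/\sigma_\eta^2$ and $-\mu\tau/\sigma_\eta^2$ are the same expression, so you could equally invoke \Cref{lemma:conditional_positive} directly as the paper does.
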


Our theoretical model is not a comprehensive account of human decision-making, but a stylized framework to clarify the logic behind our empirical hypothesis (i.e., selective filtering). The idea that constraints (in various forms) improve decision-making is not new. There are several classical works on how structural and cognitive limits shape behavior, including studies on choice overload \citep{iyengar2000choice, iyengar2010choice}, cognitive bandwidth constraints \citep{ariely2001timely, evans2019optimal, shanmugam2020decision}, and foundational models of bounded rationality and rational inattention \citep{simon1955behavioral, sims2003implications}. While our Bayesian model provides one plausible microfoundation, the underlying mechanism does not depend on this specific structure.

To illustrate this robustness, Appendix~\ref{appendix: alternative models} presents a Rational Inattention (RI) model in which workers endogenously choose both signal precision and override decisions under cognitive cost. Unlike the Bayesian setup, the RI model embeds information-processing frictions directly into the decision problem. Despite this departure, we prove an analogue of Proposition~\ref{prop:filtering}, showing that constrained override yields higher expected informational value than either free or no override. 

Our contribution also contrasts with recent human-AI collaboration research focused on improving override performance by enhancing belief accuracy through elicitation \citep{ibrahim2021eliciting}, transparency \citep{balakrishnan2025human}, interpretability \citep{lu20251+}, or forecast aggregation \citep{chen2023algorithm}. These interventions help workers assess override value more accurately on a case-by-case basis. In contrast, our approach assumes no such cognitive improvement. Constrained override remains effective even if workers cannot precisely estimate override value, as long as they can weakly rank opportunities. Rather than correcting beliefs, we reshape the decision space to induce implicit sorting. This makes the mechanism practical and scalable: it requires no training, no interface changes, and no individualized feedback. It shifts the burden from identifying the best overrides to selecting only a few. In that sense, our solution is structural rather than behavioral.

Most importantly, Proposition~\ref{prop:filtering} provides a testable prediction on the relative performance of override policies, which arises not from greater effort or override volume, but from better selection of SKUs to override (\Cref{lemma:conditional_positive}). The override value $\mathbb{E}[\eta_k]$ represents a causal effect, as override decisions are fully endogenized in practice. This makes a randomized experiment essential for identification. We evaluate this prediction using both machine-level outcomes and SKU-level mechanisms in Sections~\ref{section: results}-\ref{section:selective filtering}. Before turning to the experiment, we derive one final implication from the model to guide implementation.
\begin{proposition}[Optimal Constraint Below Expected Free Overrides]  
    \label{prop:optimal_K_ratio}  
    Under the conditions of Proposition~\ref{prop:filtering}, the optimal number of overrides \( K^* \) that maximizes expected informational value satisfies:
    \[
    K^* < \mathbb{E}[K_{\text{free}}] = N \cdot \Phi\left( \frac{\mu}{\tau} \right).
    \]
\end{proposition}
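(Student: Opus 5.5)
The plan is to compute the expected informational value of the constrained policy exactly as a function of $K$, and then show that it is strictly negative for every $K \ge \mathbb{E}[K_{\text{free}}]$. Since the optimum has strictly positive value, this forces $K^*$ below $\mathbb{E}[K_{\text{free}}]$. Throughout, write $c \coloneqq -\mu/\tau > 0$. Under the true distribution $s_k \sim \mathcal{N}(\mu,\tau^2)$, so $U_k = (s_k-\mu)/\tau$ is standard normal. We also have $\mathbb{E}[\eta_k \mid s_k] = \mu + \frac{\sigma_\eta^2}{\tau} U_k$. This is the true posterior under the true prior; it is distinct from the worker's miscalibrated posterior but is monotone in the same signal. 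The free override set is $\{s_k>0\} = \{U_k > c\}$. Its expected size is $N(1-\Phi(c)) = N\Phi(\mu/\tau)$, which verifies the displayed identity for $\mathbb{E}[K_{\text{free}}]$.

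\emph{Step 1 (value as a function of $K$).} Using the approximation $A^{\text{Constrained}} = \{k: s_k \ge s_{(K)}\}$ and conditioning on the signals (by the tower property, since the selection depends only on $s$), I will show
\[
V(K) \coloneqq \mathbb{E}\Bigl[\sum_{k\in A^{\text{Constrained}}}\eta_k\Bigr] = K\Bigl(\mu + \tfrac{\sigma_\eta^2}{\tau}\, a(K)\Bigr).
\]
This is the same computation underlying \Cref{lemma:conditional_positive}. The marginal gain is $V(K)-V(K-1) = \mu + \frac{\sigma_\eta^2}{\tau}\mathbb{E}[U_{(K)}]$, which is decreasing in $K$. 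Hence $V$ is discretely concave and $a(K)$ is nonincreasing. By the right inequality of the condition in Proposition~\ref{prop:filtering}, there is an admissible $K$ with $V(K)>0$. It follows that $V(K^*)>0$.

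\emph{Step 2 (key bound on $a(K)$).} For any $K$, I will use the pointwise inequality
\[
\sum_{j=1}^K U_{(j)} = cK + \sum_{j=1}^K (U_{(j)}-c) \le cK + \sum_{k=1}^N (U_k-c)^+ .
\]
This holds because terms with $U_{(j)}<c$ are negative and we only add nonnegative terms. Taking expectations and using the truncated-normal identity $\mathbb{E}[(U-c)^+] = (1-\Phi(c))(h(c)-c)$ gives
\[
a(K) \le c + \frac{N(1-\Phi(c))}{K}\,\bigl(h(c)-c\bigr).
\]
Since $h(c) > c$ (the inverse Mills ratio exceeds its argument), for every $K \ge N(1-\Phi(c)) = \mathbb{E}[K_{\text{free}}]$ the right-hand side is at most $h(c)$. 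Thus $a(K)\le h(c)$ on that range.

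\emph{Step 3 (conclusion).} The left inequality of the condition in Proposition~\ref{prop:filtering} reads $h(c) < c\tau^2/\sigma_\eta^2$, which is equivalent to $\mu + \frac{\sigma_\eta^2}{\tau}h(c) < 0$. Combined with Step 2, this gives $V(K) < 0$ for all $K \ge \mathbb{E}[K_{\text{free}}]$. Since $V(K^*) > 0$, every such $K$ is strictly worse than $K^*$, so $K^* < \mathbb{E}[K_{\text{free}}]$.

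The main obstacle I anticipate is Step 2: relating discrete expected order statistics to the continuous threshold $c$. The $(\cdot)^+$ sandwich is the device I would try first, because it avoids any quantile approximation of $\mathbb{E}[U_{(K)}]$. A secondary care point is that $\mathbb{E}[K_{\text{free}}]$ need not be an integer. The bound in Step 2 holds for real $K$, so it is enough to apply it to integers $K \ge N\Phi(\mu/\tau)$.
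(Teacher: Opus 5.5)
Your proof is correct. It shares the paper's skeleton: the same value function $V(K)=K\bigl(\mu+\frac{\sigma_\eta^2}{\tau}a(K)\bigr)$ and the same endgame $\mu+\frac{\sigma_\eta^2}{\tau}a(K)\le\mu+\frac{\sigma_\eta^2}{\tau}h(-\mu/\tau)<0$. Where you differ is in how you establish the key bound $a(K)\le h(-\mu/\tau)$.

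The paper sets $\bar K=\lceil\mathbb{E}[K_{\text{free}}]\rceil$ and uses that $a$ is nonincreasing. It then cites Rychlik's sharp bound $a(\bar K)\le h\bigl(\Phi^{-1}(1-\bar K/N)\bigr)$ on the expected average of the top $\bar K$ normal order statistics, and finishes with monotonicity of $h$ and $\Phi^{-1}$.

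You instead prove what is needed from scratch, using three ingredients: the pointwise domination $\sum_{j\le K}U_{(j)}\le cK+\sum_k(U_k-c)^+$ at the single threshold $c=-\mu/\tau$, the closed form $\mathbb{E}[(U-c)^+]=(1-\Phi(c))(h(c)-c)$, and $h(c)>c$. This is in fact the dual form of the cited inequality. Minimizing your bound over $c$ (the minimizer is $c=\Phi^{-1}(1-K/N)$) returns exactly $a(K)\le h\bigl(\Phi^{-1}(1-K/N)\bigr)$; you simply evaluate it at the one threshold the statement requires.

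What each route buys:
\begin{itemize}
\item Your version is elementary and self-contained. It works for every real $K\ge N\Phi(\mu/\tau)$, so it needs neither the ceiling nor monotonicity of $a$ or $h$, and it gives the strict $V(K)<0$ directly.
\item The paper's version gives the sharper $K$-dependent bound, which would be the right tool for locating $K^*$ more precisely, at the cost of an external citation.
\end{itemize}

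Two smaller points in your favor. You make explicit that $V(K^*)>0$, via the right half of the condition in Proposition~\ref{prop:filtering}; the paper leaves this implicit when passing from $V(K)\le 0$ on $K\ge\bar K$ to $K^*<\bar K$. Your tower-property derivation of $V(K)$ over the whole signal vector is also a cleaner justification than the per-SKU route through Lemma~\ref{lemma:conditional_positive}.
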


Proposition~\ref{prop:optimal_K_ratio} offers practical guidance for setting the override cap $K$. The result supports a conservative approach to implementation: firms should set $K$ below the typical override volume observed under unconstrained discretion to ensure net positive value from human input.

\subsection{Experiment Design}
\label{subsection: experiment design}
A central feature of our experimental design, conducted in collaboration with Fengyi Technology, is randomization at the \textit{worker level} rather than the SKU or decision level, as in prior studies (e.g., \citealt{kesavan2020field, liang2024examining}). Unlike SKU-level randomization, which enforces overrides on specific items and eliminates non-compliance (precluding endogenous selection by workers), our approach varies discretionary power across workers, allowing them to choose which SKUs to override based on their private signals. This setup is crucial for testing our selective filtering hypothesis.

The 46-day experiment started on May $4$, $2023$, and included all full-time workers responsible for managing at least $10$ vending machines. We randomized $553$ workers, supervising more than $59,000$ machines and $4,000$ SKUs, into three groups: $40\%$ ($n=221$) to the no override group, prohibiting downward override; $40\%$ ($n=223$) to the constrained override group, allowing downward overrides for at most $k$ SKUs per machine; and $20\%$ ($n=109$) to the free override group, with no restrictions on downward overrides. All groups could increase algorithm-suggested quantities.  

Guided by \Cref{prop:optimal_K_ratio}, which recommends choosing a constraint below the average to promote selective filtering, we set \( K = 2 \) for the constrained override policy based on evidence from the previous pilot. Figure~\ref{fig: override_dist} (in the online appendix \Cref{appendix: exp}) plots the distribution of the number of downward overrides per machine, which shows a median of four SKUs and a higher mean. Our choice of \( K = 2 \) reflects a conservative setting.

We ensured covariate balance through repeated randomization using pre-treatment data from April $1$ to May $3$, $2023$ (Table~\ref{table: randomization check} in the online appendix \Cref{appendix: exp}). We followed each iteration of the assignment with pairwise t-tests until we detected no significant differences across key covariates. We {selected} these covariates based on their potential influence on causal estimates and include worker tenure (\textit{Tenure}), the number of machines managed (\textit{Managed Machines}), the share of top-performing machines (\textit{Top-Sales Machines}), the share of high-sales SKUs (\textit{Top-Sales SKUs}), and the share of growth-stage SKUs (\textit{Growth-Stage SKUs}). We also included two behavioral covariates that reflect override activity: the daily number of machines with downward overrides (\textit{Intervened Machines}) and the number of SKUs per machine with decreased replenishment quantities (\textit{Decreased SKUs}).

To minimize interference across groups, workers executed the policies through the app notification that they use daily to replenish stock. We {made} no additional announcements. We implemented banner notifications within the app (Figure~\ref{fig: screenshots for different groups} in the online appendix \Cref{appendix: exp}) to ensure that local workers were clearly informed about their assigned override power. The banner for the constrained override group explicitly displays the two-SKU constraint and indicates the current number of overridden SKUs. If an override exceeds this constraint, the app prompts workers to reduce the number of overridden SKUs accordingly. The banner for the free override group informs workers that they can freely reduce replenishment quantities for any SKU without restrictions. To prevent potential confounding effects from the presence of banners alone, we also {included} a ``placebo" banner for the no override group that displays a standard replenishment rule irrelevant to the override experiment. 

\section{Impact of Constrained and Free Override Policies}
\label{section: results}
To estimate the effects of override policies, we construct panel data at the $(i,j,t)$ level, where each observation corresponds to worker $i$'s replenishment of machine $j$ on day $t$. For each replenishment, we compute machine-level outcomes by averaging the performance of all replenished SKUs over the interval between the current and subsequent replenishments. This structure ensures that outcomes reflect the consequences of the override applied to the subset of replenished SKUs.

We focus on three machine-level outcomes. \textit{Sales} measures the average quantity sold across replenished SKUs during the replenishment window. \textit{Sales Probability} is defined as the proportion of replenished SKUs that record at least one unit sold, thus capturing how effectively inventory translates into realized demand across the assortment. It serves as a more precise indicator of SKU-level performance compared to traditional stockout rates, which may fail to accurately reflect lost sales. \textit{Inventory} represents the average quantity on hand across replenished SKUs during the same window, reflecting stock levels maintained between replenishments. (We also present model-free evidence in Figure~\ref{fig: model free} in the online appendix \Cref{appendix: main results}, which shows excessive reduction in inventory and sales of the free override group.)

We estimate the following difference-in-differences (DID) specification:
\begin{equation}\label{eq:did_machine}
    \begin{aligned}
        \textit{Outcome}_{ijt} = \alpha 
        &+ \delta_1 (\textit{ConstrainedOverride}_i \cdot \textit{Post}_t) \\
        &+ \delta_2  (\textit{FreeOverride}_i \cdot \textit{Post}_t) 
        + \gamma_i + \delta_{c(j)t} + \varepsilon_{ijt}
    \end{aligned}
\end{equation}
where $\textit{ConstrainedOverride}_i$ and $\textit{FreeOverride}_i$ are binary indicators denoting whether worker $i$ was assigned to the constrained or free override groups, respectively, with the no-override group as control. The variable $\textit{Post}_t$ equals to one for dates on or after the policy implementation date, and zero otherwise. We include worker fixed effects $\gamma_i$ to control for time-invariant heterogeneity across individuals, and city-by-date fixed effects $\delta_{c(j)t}$ to absorb localized temporal shocks. The coefficients $\delta_1$ and $\delta_2$ identify the average treatment effects of constrained and free override policies, respectively, relative to the no override baseline. Throughout our paper, we cluster standard errors at the worker level to account for serial correlation and the unit of randomization.

\begin{table}[!htbp] 
    \linespread{0.5}
    \centering 
    \footnotesize
    \caption{Causal Impact of Free and Constrained Override Policies (Machine-Level Estimates)} 
    \label{table: main results}
    \begin{tabular}{@{\extracolsep{5pt}}lccc} 
    \toprule[1pt]\\[-1.5ex]
    
     & $\log(\textit{Sales}+1)$ & \textit{Sales Probability} & $\log(\textit{Inventory}+1)$\\[0.5ex]
     & (\textit{Unit: Item}) & ($\%$) & (\textit{Unit: Item})\\
    
    \midrule
    
     $ConstrainedOverride_i \cdot Post_{t}$ & $0.0010$ & $-0.0020$ & $-0.0128^{***}$\\ 
     $ $ & $(0.0039)$ & $(0.0027)$ & $(0.0049)$\\ 
     $FreeOverride_i \cdot Post_{t}$ & $-0.0119^{***}$ & $-0.0138^{***}$ & $-0.0195^{***}$\\ 
     $ $ & $(0.0043)$ & $(0.0033)$ & $(0.0066)$\\ 
    
    \midrule
    
    Worker fixed effects & Yes & Yes & Yes\\
    (City-Date) fixed effect & Yes & Yes & Yes\\
    R$^{2}$ & $0.1751$ & $0.2072$ & $0.1708$\\
    Observations & $639,112$ & $639,112$ & $639,112$\\ 
    
    \bottomrule\\[-1ex]
    
    \multicolumn{4}{l}{\textit{Note.} Robust standard errors clustered by each worker are in parentheses.}\\
    \multicolumn{4}{l}{$^{*}p<0.1$; $^{**}p<0.05$; $^{***}p<0.01$} \\ 
    \end{tabular} 
\end{table}

Table~\ref{table: main results} presents the main policy effects. The free override policy leads to a $1.95\%$ ($p<0.01$) reduction in inventory but also causes a $1.19\%$ ($p<0.01$) decline in sales and a $1.38$ percentage point ($p<0.01$) drop in sales probability. These estimates suggest that granting workers full discretion results in overly aggressive reductions, which in turn suppress realized demand and narrow SKU-level sales coverage. This pattern echoes prior evidence that unstructured human intervention can degrade performance in algorithmic systems \citep{kesavan2020field, liang2024examining,wang2025power}. For example, \cite{wang2025power} documents that human evaluators who override algorithmic predictions make significantly worse decisions on average. Collectively, the consistent risk is that free override policies tend to amplify noise more than they reveal insight.

In contrast, the constrained override policy reduces inventory by $1.28\%$ ($p<0.01$) without any statistically significant change in either sales or sales probability. This pattern suggests that limiting discretion improves the quality of overrides. If workers simply reduced fewer SKUs (randomly) without improving prioritization, we would expect a smaller but still negative effect on sales. The fact that sales remain stable despite lower inventory implies more selective overrides. Together, the main result is consistent with the prediction from \Cref{prop:filtering}, showing that the constraint affects not only the quantity of overrides, but also the quality and resulting operational performance.

We conduct several robustness checks to assess the validity of our findings. (See the details in the online appendix \Cref{appendix: main result}.) First, we re-estimate the model using level outcomes instead of log-transformed variables. The results remain consistent, confirming that our conclusions are not sensitive to outcome scaling. Second, we exclude weekends and major holidays to account for irregular demand patterns. The effects remain quantitatively similar.  Finally, we estimate weekly DID effects and find that the effect persists over six weeks in the post-treatment period, which reflects systematic overrides in decision-making rather than transitory noise.

\subsection{Changes in Replenishment and Overriding Behavior}
\label{subsection: override analysis}

\begin{table}[htbp] 
    \linespread{0.5}
    \centering 
    \scriptsize
    \caption{Summary Statistics of Post-Treatment Replenishment and Override Behavior}
    \label{table:override_analysis} 
    \begin{tabular}{@{\extracolsep{2pt}}lccc} 
    \toprule
     & \textit{No Downward} & \textit{Constrained Downward} & \textit{Free Downward} \\ 
     & \textit{Override} & \textit{Override} & \textit{Override} \\ 
    
    \midrule
    
    \textbf{Algorithm Suggested}\\ [0.5ex]
    
    \textit{Total Replenishment Quantity per Machine} 
     & $67.5030$ & $67.3102$ & $67.7611$ \\
    \textit{(Unit: Item)} & $[23.8934]$ & $[25.8881]$ & $[24.6702]$ \\  
     
    \textit{\# of Replenished SKUs per Machine} 
     & $15.0123$ & $14.7256$ & $14.6471$ \\ 
    \textit{(Unit: SKU)} & $[3.9810]$ & $[4.6331]$ & $[4.0278]$ \\  
    
    \textit{Replenishment Quantity per SKU} 
     & $4.0221$ & $4.0059$ & $4.1116$ \\ 
    \textit{(Unit: Item)} & $[0.9055]$ & $[1.1946]$ & $[1.2644]$ \\  
    
    \textbf{Actual} \\  [0.5ex]
    
    \textit{Total Replenishment Quantity per Machine} 
     & $74.7499$ & $74.1293$ & $68.9102$ \\ 
    \textit{(Unit: Item)} & $[25.6278]$ & $[24.4824]$ & $[23.9915]$ \\  
    
    \textit{\# of Replenished SKUs per Machine} 
     & $16.2870$ & $16.2705$ & $15.5820$ \\ 
    \textit{(Unit: SKU)} & $[3.7708]$ & $[3.6787]$ & $[3.6972]$ \\

    \textit{Replenishment Quantity per SKU} 
     & $4.4988$ & $4.5013$ & $4.3165$ \\ 
    \textit{(Unit: Item)} & $[1.1620]$ & $[1.4862]$ & $[1.2964]$ \\  
    
    \midrule
    
    \textbf{Downward Override} \\[0.5ex]
    
    \textit{
    \textbf{\# of SKUs Overridden per Machine}} 
     & \textbf{$0.6627$} & \textbf{$1.1795$} & \textbf{$2.7275$} \\ 
    \textit{(Unit: SKU)} & $[1.9577]$ & $[2.0567]$ & $[2.9104]$ \\  
    
    \textit{Quantity Reduced per Overridden SKU} 
     & $3.1414$ & $4.4487$ & $3.4221$ \\ 
    \textit{(Unit: Item)} & $[2.4278]$ & $[2.6788]$ & $[1.8185]$ \\  
    
    \textit{Percentage of SKUs Decreased to Zero Quantity per Machine} 
     & $0.0119$ & $0.0227$ & $0.0358$ \\ 
    \textit{(Unit: \%)} & $[0.0581]$ & $[0.0613]$ & $[0.0569]$ \\  
    
    \textit{Total Replenishment Quantity Reduced per Machine} 
     & $2.4203$ & $4.9991$ & $9.1937$ \\ 
    \textit{(Unit: Item)} & $[7.5783]$ & $[7.9291]$ & $[10.4052]$ \\  
    
    \textbf{Upward Override} \\  [0.5ex]
    
    \textit{\# of SKUs Overridden per Machine } 
    & $3.1416$ & $3.6084$ & $3.4380$ \\ 
    \textit{(Unit: SKU)} & $[3.8892]$ & $[4.4969]$ & $[3.7505]$ \\  
    
    \textit{Quantity Increased per Overridden SKU} 
     & $2.9616$ & $3.2681$ & $2.9593$ \\ 
    \textit{(Unit: Item)} & $[1.9653]$ & $[3.2192]$ & $[1.8809]$ \\  
    
    \textit{Total Replenishment Quantity Increased per Machine} 
     & $9.6672$ & $11.8182$ & $10.3429$ \\ 
    \textit{(Unit: Item)} & $[14.5975]$ & $[17.6974]$ & $[13.4644]$ \\ 
    
    \bottomrule \\[-0.5ex]
    \multicolumn{4}{l}{\textit{Note.} Standard deviations of each group are in brackets.} \\
    \multicolumn{4}{l}{\# of SKUs overridden per machine includes shelf with no ($0$) overrides.} \\
    \end{tabular} 
\end{table}

As shown in \Cref{subsection: mechanism}, the main policy effects are driven largely by workers' endogenous decision of overrides. We now begin unpacking those behavioral changes through a systematic investigation of workers’ post-treatment replenishment and override decisions.

We first inspect the distribution of the number of downward overrides (see \Cref{fig: Downward SKU} of the online appendix \Cref{subsection: main result}) and confirm that there is a sharp concentration at one to two overrides, consistent with the policy cap of $K=2$.  To provide a more comprehensive picture, \Cref{table:override_analysis} summarizes post-treatment replenishment behavior across policy groups. Actual replenishment consistently exceeds algorithmic suggestions, particularly in the constrained override group, where workers stock an average of $74.1$ items per machine compared to $67.3$ suggested. This reflects the design of the policy that restrictions apply only to downward overrides, while upward overrides remain unrestricted. The most salient differences appear in downward overrides: free override workers remove an average of $9.2$ items across $2.7$ SKUs per machine, while constrained workers remove about $5.0$ items across $1.2$ SKUs, with larger reductions per SKU ($4.45$ vs. $3.42$). Constrained workers also override slightly more SKUs in the upward direction ($3.61$ vs. $3.44$) and add more inventory per machine ($11.82$ vs. $10.34$), with greater quantities per overridden SKU ($3.27$ vs. $2.96$). 

\begin{table}[ht]
    \linespread{0.5}
    \centering
    \footnotesize
    \caption{Effects of Override Policies on Replenishment and Override Behavior}
    \label{table:effect_on_behavior}
    \begin{tabular}{lccccccc}
    \toprule
    \multicolumn{8}{l}{\textbf{Panel A: Replenishment Behavior}} \\
    \midrule
    & \multicolumn{3}{c}{\textit{Algorithm Suggested}} & \multicolumn{3}{c}{\textit{Actual Behavior}} & \\
    \cmidrule(lr){2-4} \cmidrule(lr){5-7}
     & Qty & SKU & Qty/SKU & Qty & SKU & Qty/SKU & \\
     & (Item) & (Count) & (Item) & (Item) & (Count) & (Item) & \\
    \midrule
    $ConstrainedOverride_i \cdot Post_t$
    & $0.7625$ & $-0.0186$ & $0.0502$
    & $-0.7325$ & $-0.1104$ & $-0.1083^{**}$ & \\
    & ($0.9512$) & ($0.1776$) & ($0.0357$)
    & ($0.8487$) & ($0.1301$) & ($0.0441$) & \\
    
    $FreeOverride_i \cdot Post_t$
    & $3.6673^{**}$ & $0.2669$ & $0.2244^{***}$
    & $-3.8175^{***}$ & $-0.5505^{***}$ & $-0.1816^{**}$ & \\
    & ($1.6838$) & ($0.2965$) & ($0.0684$)
    & ($1.0559$) & ($0.1674$) & ($0.0736$) & \\
    
    \midrule
    Worker \& City-Date FE & Yes & Yes & Yes & Yes & Yes & Yes & \\
    R$^2$ & $0.1937$ & $0.2099$ & $0.2105$ & $0.1521$ & $0.1431$ & $0.1112$ & \\
    Observations & $639{,}112$ & $639{,}112$ & $639{,}112$ & $639{,}112$ & $639{,}112$ & $639{,}112$ & \\[0.5ex]
    
    \toprule
    \multicolumn{8}{l}{\textbf{Panel B: Override Behavior}} \\
    \midrule
    & \multicolumn{4}{c}{\textit{Downward Override}} & \multicolumn{3}{c}{\textit{Upward Override}} \\
    \cmidrule(lr){2-5} \cmidrule(lr){6-8}
     & SKU & Qty/SKU & Zero\% & Qty Red. & SKU & Qty/SKU & Qty Inc. \\
     & (Count) & (Item) & (\%) & (Item) & (Count) & (Item) & (Item) \\
    \midrule
    $ConstrainedOverride_i \cdot Post_t$
    & $0.3624^{***}$ & $0.7542^{***}$ & $0.0109^{***}$ & $2.0719^{***}$
    & $0.2296^{**}$ & $-0.0403$ & $0.5769$ \\
    & ($0.1215$) & ($0.1260$) & ($0.0028$) & ($0.3755$)
    & ($0.1168$) & ($0.0794$) & ($0.4696$) \\
    
    $FreeOverride_i \cdot Post_t$
    & $1.9564^{***}$ & $0.1201$ & $0.0267^{***}$ & $6.6767^{***}$
    & $0.0777$ & $-0.1810^{**}$ & $-0.8080$ \\
    & ($0.2331$) & ($0.1195$) & ($0.0052$) & ($0.7933$)
    & ($0.2179$) & ($0.0706$) & ($0.8577$) \\
    
    \midrule
    Worker \& City-Date FE & Yes & Yes & Yes & Yes & Yes & Yes & Yes \\
    R$^2$ & $0.4334$ & $0.1492$ & $0.3371$ & $0.3633$ & $0.3885$ & $0.2219$ & $0.2969$ \\
    Observations 
    & $639{,}112$ & $192{,}300$ & $639{,}112$ & $639{,}112$
    & $639{,}112$ & $344{,}391$ & $639{,}112$ \\
    \bottomrule \\[-0.5ex]
    \multicolumn{8}{l}{\textit{Note.} Robust standard errors clustered at the worker level in parentheses.} \\
    \multicolumn{8}{l}{$^{*}p<0.1$; $^{**}p<0.05$; $^{***}p<0.01$}
    \end{tabular}
\end{table}

We conduct a formal DID estimation to test whether the override policies causally influence algorithm and worker behavior as intended (Table~\ref{table:effect_on_behavior}). We re-estimate \Cref{eq:did_machine} using dependent variables that capture replenishment behavior defined in \Cref{table:override_analysis}. Panel A shows that the constrained override policy does not affect algorithmic suggestions. The free override policy leads to a significant increase in suggested quantities, an average of $3.67$ additional items per machine ($p < 0.05$). To examine the timing and nature of this response, we report weekly estimates in \Cref{table: algorithm suggestion weekly did} of the online appendix (\Cref{appendix: main results}). The increase in suggested quantities emerges only after the first week post-treatment, and the number of suggested SKUs remains unchanged throughout. This pattern indicates that the algorithm is specifically reacting to under-replenishment by increasing the quantity allocated per SKU.

Turning to actual replenishment behavior in Panel A, the free override policy produces broad reductions. Total replenishment quantity decreases by $3.82$ items per machine ($p < 0.01$), the number of SKUs by $0.55$ ($p < 0.01$), and the quantity per SKU by $0.18$ ($p < 0.05$). Under the constrained override policy, only the quantity per SKU declines ($-0.11$, $p < 0.05$), while total quantity and SKU count remain statistically unchanged. These patterns are consistent with our hypothesis that the constrained override policy narrows the scope of discretion. One might alternatively argue that the constraint simply limits overreaction, prompting workers to make smaller, more cautious overrides. However, evidence from Panel B contradicts this interpretation. Constrained workers override significantly fewer SKUs ($0.36$ vs. $1.96$, $p < 0.01$) but apply much larger reductions per SKU ($0.75$ vs. $0.12$, $p < 0.01$), leading to an average of $2.07$ items removed per machine, compared to $6.68$ under free override. If the constraint merely tempered worker behavior, we would expect smaller, not larger, override quantities per SKU.

Panel B also allows us to examine whether constrained workers reallocate quantity to other SKUs rather than removing it. Upward overrides increase slightly in SKU count ($0.23$, $p < 0.05$) and total quantity ($0.58$), but these changes are modest relative to the magnitude of downward reductions and not statistically significant when considered independently. The majority of SKUs are not overridden, and net replenishment remains stable. We revisit this potential reallocation channel in the next section, where we analyze SKU-level performance to test for spillover effects more directly. Overall, the evidence from both panels supports the selective filtering mechanism: constrained override does not dilute or suppress override activity but instead concentrates worker attention on a smaller set of high-value overrides.

\section{Testing the Selective Filtering: Local Average Treatment Effect }
\label{section:selective filtering}
Having shown that override policies causally affect machine performance and shape override behavior in the intended direction, we now turn to isolating the causal effect of the override decision itself. In doing so, we evaluate the selective filtering mechanism proposed in \Cref{lemma:conditional_positive}, which posits that constrained override improves performance by inducing workers to concentrate their overrides on the most valuable subset of SKUs.

Our goal is to estimate the causal effect of endogenously overriding a SKU on its performance until the next replenishment. However, such comparisons are infeasible because override decisions are made by workers and cannot be directly randomized at the SKU level. Instead, the randomized override policy generates exogenous variation in override frequency across groups. As shown in \Cref{table:override_analysis}, the no override group exhibits minimal downward override activity, while the constrained and free override groups allow progressively more overrides by design. We leverage this policy-induced variation as an instrument to identify the causal effect of overrides for the subset of SKUs whose override status is changed by the assigned policy.

Specifically, we adopt the local average treatment effect (LATE) framework \citep{imbens1994identification}, comparing the constrained override group (\( D_i = 1 \)) to the no override group (\( D_i = 0 \)). (The logic extends directly to comparisons involving the free override group.) Within the potential outcomes framework \citep{rubin1974estimating}, let \( \text{DownOverride}_{ijkt}(D_i) \) denote the potential override status of SKU \( k \) on machine \( j \) at time \( t \) by worker \( i \) under policy \( D_i \). Similarly, let \( Y_{ijkt}(D_i) \) denote the corresponding performance outcome (e.g., sale probability until the next replenishment).

Each SKU-level instance can be classified into different types based on how override status responds to policy. \textbf{Never-overridden instances} are those for which \( \text{DownOverride}_{ijkt}(1) = \text{DownOverride}_{ijkt}(0) = 0 \); that is, they are not overridden under either policy. \textbf{Always-overridden instances} satisfy \( \text{DownOverride}_{ijkt}(1) = \text{DownOverride}_{ijkt}(0) = 1 \), meaning they are consistently overridden. \textbf{Compliers} are defined by \( \text{DownOverride}_{ijkt}(1) = 1 \) and \( \text{DownOverride}_{ijkt}(0) = 0 \); these are marginal override decisions induced by the policy and thus form the target population for the LATE. Given this conceptual definition, \cite{imbens1994identification} show that using the randomized policy \( D_i \) as an instrument, we can consistently estimate the average treatment effect for compliers: \( \text{LATE} = \mathbb{E}[ Y_{ijkt}(1) - Y_{ijkt}(0) \mid \text{Complier} ]. \)

\begin{figure}[h]
    \FIGURE
    {\includegraphics[width=0.7\textwidth]{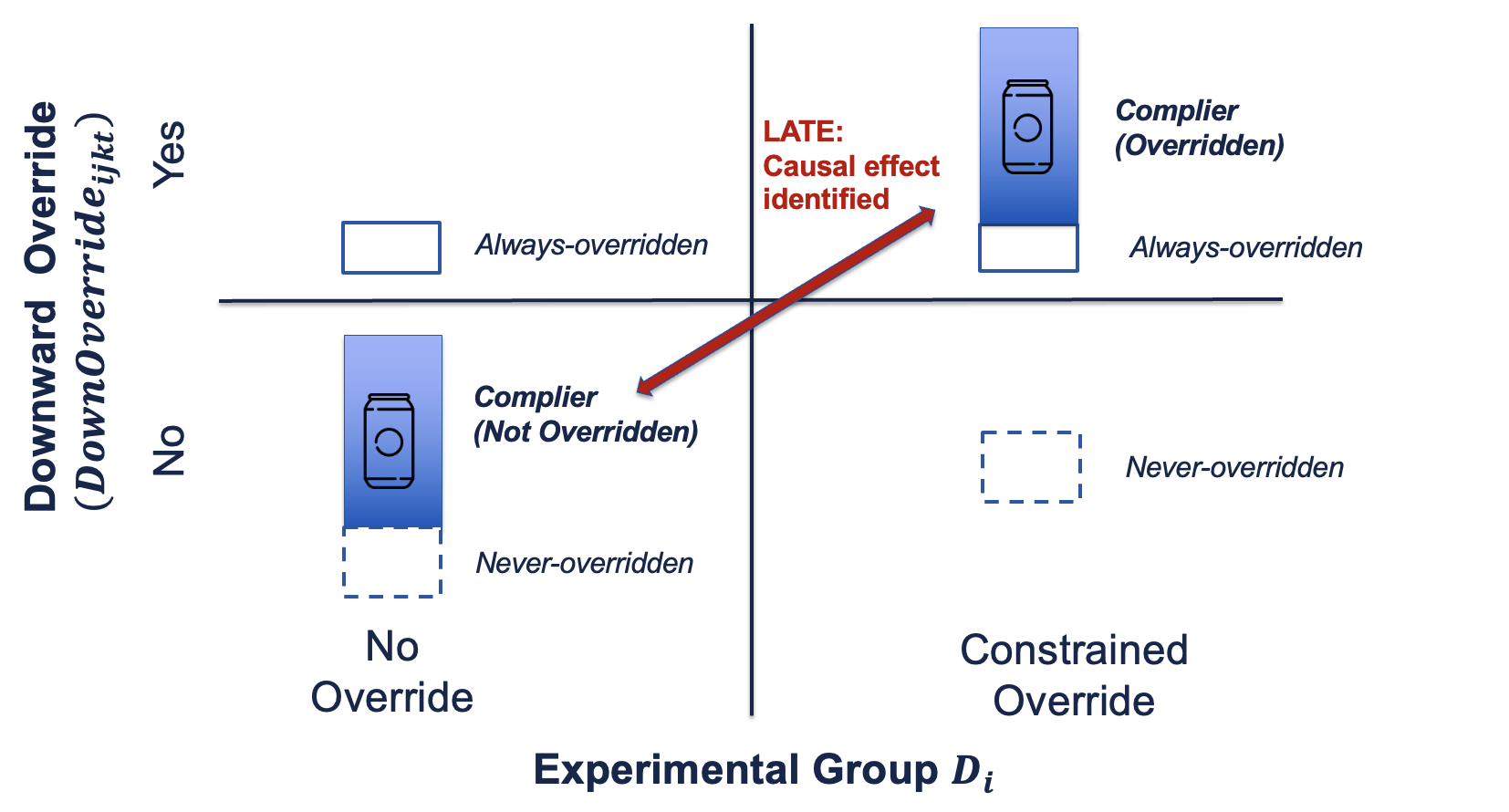}}
    {Illustration of LATE Identification: Constrained vs. No Override Policy\label{fig: late}}
    {}
\end{figure}

Figure~\ref{fig: late} illustrates how LATE is identified through observed override behavior and randomized policy assignment. Each instance falls into one of four quadrants, defined by its override status \( \text{DownOverride}_{ijkt} = \text{DownOverride}_{ijkt}(1) \cdot D_i + \text{DownOverride}_{ijkt}(0) \cdot (1 - D_i) \) and the policy assignment \( D_i \in \{0, 1\} \). Instances that are overridden in the no override group (top-left quadrant) typically reflect extreme scenarios, such as warehouse stockouts. These must be always-overridden cases: given that override is generally prohibited under this policy, any SKU that is still overridden would necessarily be overridden under the constrained condition as well. Overridden cases in the constrained group (top-right quadrant) include a mixture of always-overridden and complier instances. By randomization, the no override group also contains compliers, SKUs that are not overridden but would be if the policy allowed it (bottom-left quadrant). This variation enables identification of the LATE, which captures the causal effect of override decisions for the subset of SKUs whose status is altered by the policy. These are precisely the override decisions that the firm can influence. In this sense, LATE isolates the effect of enabling override discretion for SKUs that would otherwise follow algorithmic suggestions.

Before proceeding to estimation, we clarify how our setting satisfies the core assumptions of the LATE framework. First, the assumption of \textit{monotonicity} requires that no SKU becomes less likely to be overridden under the constrained override policy than under the no override policy, i.e., $\text{DownOverride}_{ijkt}(1) \geq \text{DownOverride}_{ijkt}(0)$ for all $i,j,k,t$. This condition holds by design: the constrained policy either expands override discretion or leaves it unchanged relative to the control, and cannot reduce the probability of an override. Second, the \textit{relevance} assumption requires that the instrumental variable, randomized policy assignment, affects the likelihood of override. This is satisfied by construction and supported empirically in \Cref{subsection: override analysis}, where we show that overrides increase under the constrained condition. Third, the \textit{exclusion restriction} requires that the policy assignment affects outcomes only through its effect on override behavior. This is justified by the experimental design in \Cref{subsection: experiment design}, where all other aspects of the replenishment process are held constant across groups. The only manipulation is the degree of discretionary power over downward override. Accordingly, it is reasonable to assume that any effect of policy assignment on SKU performance operates exclusively through changes in override behavior.

Our formal LATE estimation builds on the main analysis by applying a DID framework within a two-stage least squares (2SLS) specification, using panel data to control for time-invariant heterogeneity. We construct data at the most granular level for each replenish instance $(i, j, k, t)$. To focus on the impact of downward override in particular, we restrict the sample to include only instances where the worker followed or reduced the algorithmic suggestion. We estimate the following two-stage least squares (2SLS) model:

\noindent\textit{First stage (override decision):}
\begin{equation}
    \textit{DownOverride}_{ijkt} = \pi_0 + \pi_1 (\textit{ConstrainedOverride}_i \cdot \textit{Post}_t) + \gamma_{ij} + \gamma_k + \delta_{c(j)t} + \varepsilon_{ijkt},
\end{equation}
\noindent\textit{Second stage (outcome consequence):}
\begin{equation}
    Y_{ijkt} = \alpha_0 + \beta_C \cdot \widehat{\textit{DownOverride}}_{ijkt} + \gamma_{ij} + \gamma_k + \delta_{c(j)t} + \varepsilon_{ijkt}.
\end{equation} 
In both stages, we include fixed effects at a granular level: $\gamma_{ij}$ captures worker-machine heterogeneity, $\gamma_k$ controls for SKU-specific effects, and $\delta_{c(j)t}$ absorbs city-by-date shocks. Standard errors $\varepsilon_{ijkt}$ are again clustered at the worker level. Outcome variables are defined consistently with the main analysis but measured at the SKU-replenishment instance and averaged over the period until the next replenishment. We use level outcomes rather than log-transformed outcomes to preserve interpretability of effect magnitudes under the LATE framework. The instrument $\textit{ConstrainedOverride}_i \cdot \textit{Post}_t$ captures exogenous variation in override behavior induced by policy, and the coefficient $\beta_C$ identifies the causal effect of downward override for compliers.

\begin{table}[htbp]
    \linespread{0.5}
    \centering
    \footnotesize
    \caption{LATE with Downward Override}
    \label{table:late down}
    \begin{tabular}{lccc}
    \toprule
    & \textit{Sales} & \textit{Sales Probability} & \textit{Inventory} \\
    & (\textit{Unit: Item}) & ($\%$) & (\textit{Unit: Item})\\
    \midrule 
    
    \textbf{\textit{Constrained vs No Override}} \\
    
    \midrule 
    \multicolumn{4}{l}{\textit{First Stage (Outcome: DownOverride)}} \\
    $ConstrainedOverride_i \cdot Post_{t}$ & \multicolumn{3}{c}{$ 0.0287^{***} \quad (0.0068)$} \\
    (Worker-Machine), SKU, (City-Date) FE  &\multicolumn{3}{c}{Yes} \\
    R$^2$    & \multicolumn{3}{c}{$0.3272$}     \\
    F-stat   & \multicolumn{3}{c}{$5740.2^{***}$} \\
    Observations   &\multicolumn{3}{c}{$7,012,190$} \\
    \\
    \multicolumn{4}{l}{\textit{Second Stage (LATE)}} \\
    $\widehat{DownOverride}_{ijkt}$ & $0.2830$  & $-0.0442$ & $-3.4456^{***}$ \\
                                    & ($0.3114$) & ($0.0937$) & ($0.9118$) \\
    (Worker-Machine), SKU, (City-Date) FE   & Yes & Yes & Yes \\
    R$^2$                           & $0.2881$ & $0.3621$ & $0.3198$   \\
    Observations                    & $7,012,190$ & $7,012,190$ & $7,012,190$   \\
    
    \midrule
    
    \textbf{\textit{Free vs No Override}} \\
    
    \midrule 
    \multicolumn{4}{l}{\textit{First Stage (Outcome: DownOverride)}} \\
    $FreeOverride_i \cdot Post_{t}$ & \multicolumn{3}{c}{$ 0.1292^{***} \quad (0.0140)$} \\
    (Worker-Machine), SKU, (City-Date) FE  &\multicolumn{3}{c}{Yes} \\
    R$^2$    & \multicolumn{3}{c}{$0.3516$}     \\
    F-stat   & \multicolumn{3}{c}{$65,405.3^{***}$} \\
    Observations   &\multicolumn{3}{c}{$5,189,189$} \\
    \\
    \multicolumn{4}{l}{\textit{Second Stage (LATE)}} \\
    $\widehat{DownOverride}_{ijkt}$ & $-0.1188$  & $-0.1055^{***}$ & $-1.2575^{***}$ \\
                                    & ($0.0798$) & ($0.0257$) & ($0.3055$) \\
    (Worker-Machine), SKU, (City-Date) FE   & Yes & Yes & Yes \\
    R$^2$                           & $0.2806$ & $0.3545$ & $0.3183$   \\
    Observations                    & $5,189,189$ & $5,189,189$ & $5,189,189$   \\
    
    \bottomrule\\[-0.5ex]
    \multicolumn{4}{l}{\footnotesize Robust standard errors clustered at the worker level.} \\
    
    \multicolumn{4}{l}{\footnotesize $^{*}p<0.1$; $^{**}p<0.05$; $^{***}p<0.01$}
    \end{tabular}
\end{table}

Table~\ref{table:late down} estimates the LATE of downward overrides under both the constrained and free override policies, respectively. Relative to the no override group, which exhibits minimal downward overrides, the constrained policy increases override probability by $2.87$ percentage points ($F = 5740.2$), confirming the relevance of the instrument. The second stage shows that these additional overrides reduce inventory by $3.45$ items on average ($p < 0.01$) without significantly affecting either sales quantity or sales probability. In contrast, the free override policy increases override probability by $12.92$ percentage points ($F = 65{,}405.3$), but these overrides lead to less inventory reduction ($1.26$ items, $p < 0.01$) and a significant $10.6$ percentage point drop in the probability of sale ($p < 0.01$).

These contrasting LATE estimates illustrate the central logic of the selective filtering mechanism proposed in \Cref{lemma:conditional_positive}. Under the free override policy, override decisions reflect a mix of high- and low-quality overrides, leading to a diluted average effect. In contrast, the constrained policy imposes a cardinality limit that forces workers to sort override opportunities and act only on the most promising ones. This endogenous selection process filters out noise and low-quality overrides. As a result, overrides under constraint disproportionately target SKUs with meaningful private signals, improving outcomes at the instance level. These SKU-level gains, in turn, drive the aggregate policy effects predicted by \Cref{prop:filtering} and confirmed in Table~\ref{table: main results}.

Robustness checks in the online appendix \Cref{appendix:selective filtering test} further validate the selective filtering mechanism. First, we estimate intent-to-treat (ITT) effects by regressing outcomes directly on policy assignment. Although smaller in magnitude, these effects are directionally consistent with the LATE estimates and reflect average policy impacts across all SKUs, including those not overridden. Second, we tighten the definition of a downward override in the first stage to capture only overrides that reduce SKU inventory to exactly zero. This stricter definition yields substantially larger LATE effects, suggesting that performance gains are concentrated in strongly selective decisions where workers fully remove a SKU from stock. Third, recall that \Cref{table:effect_on_behavior} showed a modest increase in upward overrides under the randomized policy. We conduct an ITT analysis restricted to upward override instances and find no significant policy effects, helping to rule out alternative explanations based on general override effort or behavioral spillovers.

Finally, following prior literature \citep{sun2019motivating, Han2024Connecting}, we test whether the observed LATE effects reflect worker private information rather than selection based solely on observable SKU characteristics, denoted as $\mathbf{X}_{ijkt}$. We augment the LATE specification by including instance-level covariates such as sale price and discount status (see details in \Cref{table:late down with control} of the online appendix \Cref{appendix:selective filtering test}). These features may correlate with override behavior. For example, workers might be more likely to remove a discounted or low-priced SKU, which could mechanically improve inventory performance even without relying on private information. If performance gains were driven solely by such observable features, the inclusion of such covariates would attenuate the estimated LATE effects. Instead, we find that the performance improvements under constrained override persist even after controlling for observables, while the negative effects of free override remain. This result suggests that workers contribute private information beyond what is captured by the algorithm, reinforcing the value of selective filtering. 

\section{From Mechanism to Practice: Designing Override Policies}
\label{section:betteroverride}

Having established the selective filtering mechanism through theoretical analysis and empirical evidence, we now turn to practical design considerations. Specifically, we examine when override policies are most effective (policy-level heterogeneity) and how override limits might be tailored using data (instance-level learning). These extensions aim to bridge the gap between identifying the mechanism and enabling practical deployment.

\begin{table}[htbp] 
    \centering 
    \linespread{0.5}
    \scriptsize
      \caption{Heterogeneous Policy Effects at Machine Level} 
      \label{table: HTE} 
    \begin{tabular}{lcccccc} 
    
    \\  \toprule[1pt]
    
    Panel A & \multicolumn{3}{c}{\textbf{Senior Worker}} & \multicolumn{3}{c}{\textbf{Junior Worker}}\\ 
    \cline{2-4} \cline{5-7} \\  
     & $\log(\textit{Sales}+1)$ &  \textit{Sales Probability} & $\log(\textit{Inventory}+1)$ & $\log(\textit{Sales}+1)$ &  \textit{Sales Probability} & $\log(\textit{Inventory}+1)$ \\
     & (\textit{Unit: Item}) & ($\%$) & (\textit{Unit: Item}) & (\textit{Unit: Item}) & ($\%$) & (\textit{Unit: Item}) \\
    
    \midrule 
    
     $ConstrainedOverride_i \cdot $ & $0.0030$ & $-0.0013$ & $-0.0172^{***}$ & $-0.0010$ & $-0.0014$ & $-0.0095$\\ 
     $Post_{t}$ & $(0.0048)$ & $(0.0033)$ & $(0.0063)$ & $(0.0056)$ & $(0.0044)$ & $(0.0073)$\\ [0.5ex]
     $FreeOverride_i \cdot $ & $-0.0139^{***}$ & $-0.0163^{***}$ & $-0.0285^{***}$ & $-0.0100$ & $-0.0114^{**}$ & $-0.0036$\\ 
     $Post_{t}$ & $(0.0052)$ & $(0.0042)$ & $(0.0073)$ & $(0.0063)$ & $(0.0047)$ & $(0.0110)$\\
     
    \midrule
    
    Worker fixed effects & Yes & Yes & Yes & Yes & Yes & Yes\\[0.5ex]
    (City-Date) fixed effect & Yes & Yes & Yes & Yes & Yes & Yes\\[0.5ex]
    R$^{2}$ & $0.3409$ & $0.4661$ & $0.5080$ & $0.3142$ & $0.4531$ & $0.5177$\\[0.5ex]
    Observations & $386,870$ & $386,870$ & $386,870$ & $264,956$ & $264,956$ & $264,956$\\ 
    
    \midrule
    
    Panel B & \multicolumn{3}{c}{\textbf{Top-Sales SKU}} & \multicolumn{3}{c}{\textbf{Low-Sales SKU}}\\ 
    \cline{2-4} \cline{5-7} \\  
     & $\log(\textit{Sales}+1)$ &  \textit{Sales Probability} & $\log(\textit{Inventory}+1)$ & $\log(\textit{Sales}+1)$ &  \textit{Sales Probability} & $\log(\textit{Inventory}+1)$ \\
     & (\textit{Unit: Item}) & ($\%$) & (\textit{Unit: Item}) & (\textit{Unit: Item}) & ($\%$) & (\textit{Unit: Item}) \\
    
    \midrule 
    
     $ConstrainedOverride_i \cdot $ & $0.0017$ & $-0.0029$ & $-0.0190^{***}$ & $-0.0056^{**}$ & $-0.0044$ & $-0.0014$\\ 
     $Post_{t}$ & $(0.0036)$ & $(0.0019)$ & $(0.0056)$ & $(0.0028)$ & $(0.0031)$ & $(0.0047)$\\ [0.5ex]
     $FreeOverride_i \cdot $ & $-0.0158^{***}$ & $-0.0161^{***}$ & $-0.0340^{***}$ & $-0.0066^{**}$ & $-0.0126^{***}$ & $0.0116^{*}$\\ 
     $Post_{t}$ & $(0.0041)$ & $(0.0025)$ & $(0.0077)$ & $(0.0033)$ & $(0.0037)$ & $(0.0060)$\\
     
    \midrule
    
    Worker fixed effects & Yes & Yes & Yes & Yes & Yes & Yes\\[0.5ex]
    (City-Date) fixed effect & Yes & Yes & Yes & Yes & Yes & Yes\\[0.5ex]
    R$^{2}$ & $0.1219$ & $0.0948$ & $0.1474$ & $0.3410$ & $0.4260$ & $0.4113$\\[0.5ex]
    Observations & $576,123$ & $576,123$ & $576,123$ & $624,627$ & $624,627$ & $624,627$\\ 
    
    \midrule
    
    Panel C & \multicolumn{3}{c}{\textbf{Growth-Stage SKU}} & \multicolumn{3}{c}{\textbf{Mature-Stage SKU}}\\ 
    \cline{2-4} \cline{5-7} \\ 
     & $\log(\textit{Sales}+1)$ &  \textit{Sales Probability} & $\log(\textit{Inventory}+1)$ & $\log(\textit{Sales}+1)$ &  \textit{Sales Probability} & $\log(\textit{Inventory}+1)$ \\
     & (\textit{Unit: Item}) & ($\%$) & (\textit{Unit: Item}) & (\textit{Unit: Item}) & ($\%$) & (\textit{Unit: Item}) \\
    
    \midrule 
     $ConstrainedOverride_i \cdot$ & $0.0021$ & $-0.0031$ & $-0.0115^{**}$ & $-0.0020$ & $-0.0076^{*}$ & $-0.0205^{*}$ \\ 
     $Post_{t}$ & $(0.0040)$ & $(0.0029)$ & $(0.0054)$ & $(0.0084)$ & $(0.0040)$ & $(0.0116)$\\ [0.5ex]
     $FreeOverride_i \cdot$ & $-0.0062$ & $-0.0145^{***}$ & $-0.0218^{***}$ & $-0.0269^{***}$ & $-0.0252^{***}$ & $-0.0434^{***}$\\ 
     $Post_{t}$ & $(0.0054)$ & $(0.0039)$ & $(0.0076)$ & $(0.0087)$ & $(0.0053)$ & $(0.0140)$\\
    \midrule 
    
    Worker fixed effects & Yes & Yes & Yes & Yes & Yes & Yes\\[0.5ex]
    Date fixed effect & Yes & Yes & Yes & Yes & Yes & Yes\\[0.5ex]
    R$^{2}$ & $0.2820$ & $0.4088$ & $0.4584$ & $0.1668$ & $0.1106$ & $0.1737$\\[0.5ex]
    Observations & $621,497$ & $621,497$ & $621,497$ & $110,507$ & $110,507$ & $110,507$\\

    \bottomrule[1pt] \\  
    
    \multicolumn{7}{l}{\textit{Note.} Robust standard errors clustered by each worker are in parentheses.}\\
    \multicolumn{7}{l}{$^{*}p<0.1$; $^{**}p<0.05$; $^{***}p<0.01$} \\ 
    
    \end{tabular} 
\end{table}

We examine three sources of heterogeneity that may shape the effectiveness of override policies: \textit{worker experience}, \textit{incentive alignment}, and \textit{data uncertainty}. Panel A of Table~\ref{table: HTE} shows that senior workers, those with more than one year of tenure, drive both the positive effects of constrained override and the negative effects of free override. Under the constrained override policy, senior workers reduce inventory by $1.72\%$ ($p < 0.01$); under free override, they reduce inventory by $2.85\%$ but cause a substantial $1.39\%$ decrease in sales ($p < 0.01$). In contrast, junior workers have no significant effects under either policy. Panel B shows that constrained override improves inventory only for top-sales SKUs ($-1.9\%$, $p < 0.01$), while free override harms sales across all SKUs and increases inventory for low-sales SKUs ($+1.16\%$, $p < 0.1$). Panel C examines the SKU lifecycle stage as a proxy for algorithmic uncertainty. Constrained overrides significantly reduce inventory for growth-stage SKUs ($-1.15\%$, $p < 0.05$), with a similar though noisier effect for mature SKUs ($-2.05\%$, $p < 0.1$). Free override, by contrast, leads to sharp performance declines in both groups.

Taken together, these results offer valuable practical insights. Constrained override is most effective when workers have experience, when SKUs are strongly tied to sales incentives, and when algorithmic forecasts are less reliable. These conditions highlight contexts where private information is more likely to be valuable and where filtering it through policy constraints yields the greatest benefit. At the same time, the consistent harm caused by free override, regardless of experience, incentive alignment, or data quality, underscores the risk of granting full discretion. 

These findings also contribute to the broader literature on private information in operational settings \citep{van2010ordering, brau2023demand, stip2024reluctant}. Prior studies typically adopt a two-stage approach: first identifying contexts in which private information exists and then integrating it into algorithmic decision-making \citep{sun2022predicting}. Our work complements this stream by using a randomized override policy to simultaneously reveal and assess the value of private information through observed behavior. The heterogeneity results confirm earlier findings that private information is stronger among senior workers \citep{kwon2022human} and for growth-stage SKUs \citep{kesavan2020field}. Importantly, our results show that the constrained override policy not only identifies contexts with valuable private information but also regulates its application, reducing low-quality overrides even among less experienced workers and in settings where algorithmic forecasts are relatively accurate. This highlights a broader design implication: override policies can function not only as decision levers, but also as filters that shape effective human-AI collaboration under uncertainty.

The heterogeneous treatment effects suggest that the value of override discretion varies across workers and SKUs. However, our current policy imposes a uniform cardinality constraint, which may under-utilize high-quality private information and over-permit noisy overrides. To address this limitation, we develop a data-driven refinement that estimates the likelihood that a given override reflects high-quality judgment. The core idea is to learn SKU selection patterns from the constrained override group, where performance is improved, as a proxy for private information.

First, we seek to train a logistic regression model to distinguish instances that are actually overridden by the constrained group from those intended to be overridden without constraints; the latter, however, are unobservable. Instead, we use the override instances selected by the free override group as a proxy, because the constrained override and free override groups are statistically identical due to randomization in the experimental design. Differences in override behavior reflect the presence or absence of the cardinality constraint. The model utilizes a comprehensive set of features that capture local conditions, worker characteristics, and SKU-level attributes. Each override in the free group is then scored by this model to estimate its probability of resembling an override made under constrained conditions, which reflects the quality of overrides.

Second, we simulate a personalized override policy by accepting only those overrides in the free override group with predicted probabilities above a calibrated threshold. This approach relaxes the fixed limit, allowing workers with stronger override signals to override more SKUs. We design a hypothetical scenario in which local workers make override decisions under the refined solution. Compared to the original constrained policy, the personalized override group achieves a $9.1$ percentage point increase in sales probability ($p < 0.01$) and exhibits greater flexibility in override volume, validating the efficacy of this targeted refinement. We provide full details and implementation in Appendix~\ref{appendix:better override}.

\section{Conclusion}
\label{section: conclusion}
We propose a simple, scalable solution to improving human supervision of autonomous AI: a constrained override policy. By imposing a cardinality constraint on discretionary overrides, the policy activates a selective filtering mechanism that preserves high-value private information while filtering out human bias. Unlike approaches that require identifying bias or private information in advance, our method is straightforward to implement and generalizes across settings. In a large-scale field experiment on Fengyi Technology’s smart-vending network, constrained overrides improved inventory efficiency without reducing sales, in contrast to the performance losses observed under the free override policy. Following the experiment, Fengyi Technology scaled the policy to more than $170,000$ machines and $5,000$ SKUs across over $70$ metropolitan areas, contributing to an estimated $\$1,000,000$ annual reduction in inventory costs.

The effectiveness of constrained override rests on three premises. First, worker incentives should be broadly aligned with the algorithm’s objective; otherwise, override choices may diverge from improving machine performance. Second, workers must possess some private, task-relevant information. Without such information, constrained override can only mitigate the losses of free override and cannot outperform the algorithm decision. Third, the operational context should involve low- to medium-stakes decisions in which algorithms hold primary authority and humans serve as supervisors. In high-stakes domains, strict adherence constraints are impractical, and design efforts should instead focus on improving algorithmic recommendations.

\ACKNOWLEDGMENT{The authors sincerely thank Fengyi Technology for their support. In particular, we are grateful to Xinning Shan, CEO of Fengyi Technology, for his generous support. We also extend our gratitude to Xiaoxia Chen for her invaluable assistance in setting up the randomized field experiments.}

\theendnotes

\bibliographystyle{informs2014}
\bibliography{reference}

\newpage

\begin{APPENDICES}
\pagenumbering{arabic}  
\setcounter{page}{1} 

\renewcommand{\theHsection}{A\arabic{section}}

\section{Additional Information and Results from Section~\ref{section: experiment}}\label{appendix:sec3}

\subsection{Empirical Context and Experiment Design}\label{appendix: exp}

\begin{figure}[!htbp]
    \vspace{-10pt}
    \FIGURE
    {
    \subfigure{\includegraphics[width=0.4\textwidth]{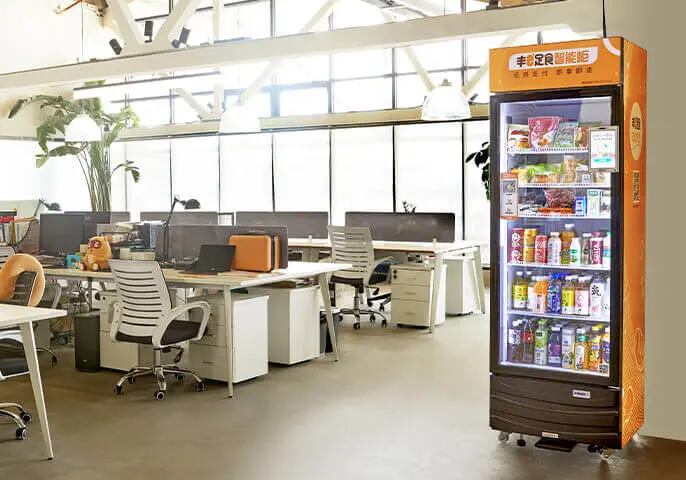}}
    \subfigure{\includegraphics[width=0.4\textwidth]{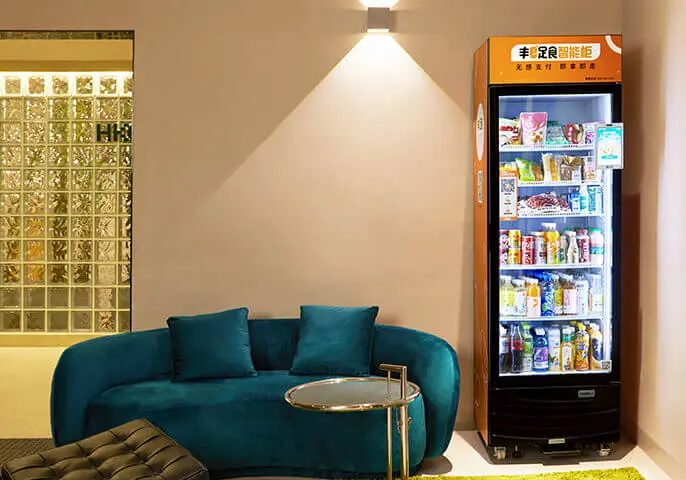}}
    }
    {Smart Vending Machines by Fengyi Technology \label{fig: smart vending machine}}
    {The figure shows photos of smart vending machines deployed by Fengyi Technology in diverse offline environments such as office areas (left) and hotel lobbies (right).}
    \vspace{-15pt}
\end{figure}

\begin{figure}[!htbp]
    \vspace{-10pt}
    \FIGURE
    {\includegraphics[width=0.6\textwidth]{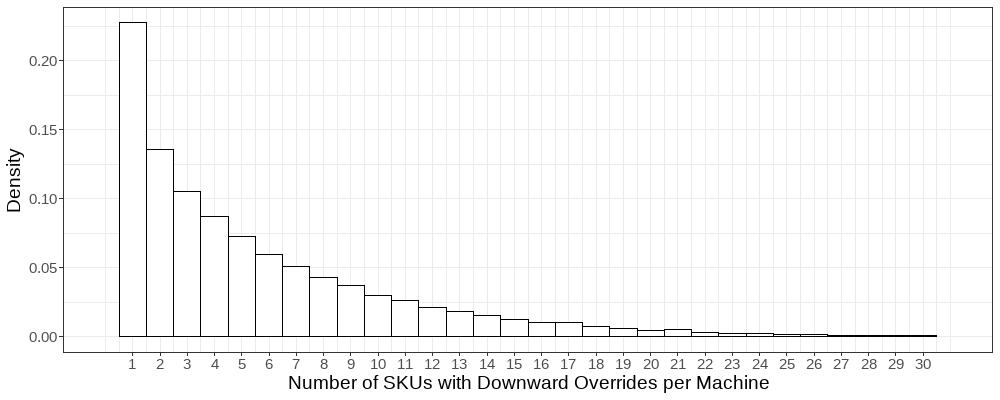}}
    {Distribution of SKU-Level Downward Overrides Per Machine in a Small-Scale Pilot (Free Override) \label{fig: override_dist}}
    {The histogram excludes shelves with no (0) downward overrides. Median = $4$. Mean = $5.55$.}
    \vspace{-15pt}
\end{figure}

\begin{figure}[!htbp]
    \vspace{-10pt}
    \FIGURE
    {\includegraphics[width=0.6\linewidth]{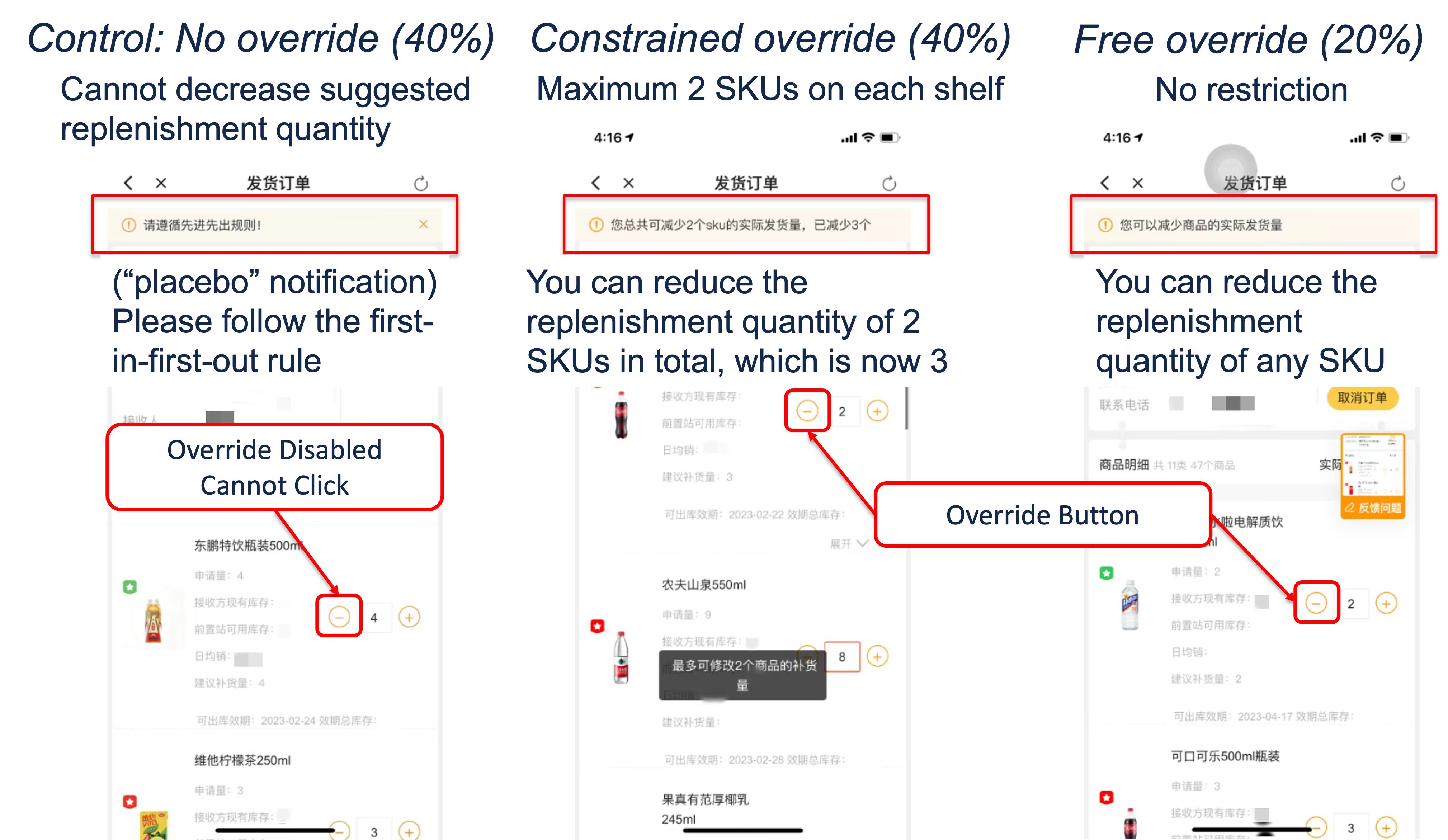}}
    {Screenshots of the App for Different Groups \label{fig: screenshots for different groups}}
    {Workers receive algorithmic suggestions via the app, along with inventory and sales data. Only downward overrides are subject to constraints; upward overrides remain unrestricted.}
    \vspace{-15pt}
\end{figure}

\begin{table}[!htbp] 
    \linespread{0.5}
    \centering
    \caption{Randomization Check}
    \label{table: randomization check}
    \footnotesize
    \begin{tabular}{@{\extracolsep{5pt}}lcccccc}
    
    \toprule[1pt]
        & \multicolumn{3}{c}{Means} & \multicolumn{3}{c}{Differences} \\[0.5ex]
        \cline{2-4} \cline{5-7}\\[-0.5ex]
        Feature & No & Constrained & Free & Constrained & Free & Constrained \\
        &  &  &  & Override & Override & Override \\
        & Override & Override & Override & $-$ No & $-$ No & $-$ Free \\
        & & & & Override & Override &Override\\
    \midrule
    
    \textit{Tenure} & $0.710$ & $0.664$ & $0.651$ & $-0.047$ & $-0.059$ & $0.012$\\
    (\textit{Binary: Senior/Junior}) & $[0.455]$ & $[0.474]$ & $[0.479]$ & $(0.044)$ & $(0.055)$ & $(0.056)$\\ 
    
    \textit{Managed Machines} & $96.757$ & $94.916$ & $94.978$ & $-1.841$ & $-1.779$ & $-0.062$\\
    (\textit{Unit: Machine}) & $[28.222]$ & $[23.971]$ & $[23.922]$ & $(2.486)$ & $(2.976)$ & $(2.798)$\\ 
    
    \textit{Top-Sales Machines} & $0.137$ & $0.145$ & $0.142$ & $0.008$ & $0.005$ & $0.003$\\
    (\textit{$\%$}) & $[0.083]$ & $[0.082]$ & $[0.081]$ & $(0.008)$ & $(0.010)$ & $(0.009)$\\ 
    
    \textit{Top-Sales SKUs} & $0.174$ & $0.178$ & $0.178$ & $0.004$ & $0.005$ & $-0.0003$\\
    (\textit{$\%$}) & $[0.061]$ & $[0.059]$ & $[0.064]$ & $(0.006)$ & $(0.007)$ & $(0.007)$\\ 
    
    \textit{Growth-Stage SKUs} & $0.554$ & $0.551$ & $0.564$ & $-0.003$ & $0.010$ & $-0.013$\\
    (\textit{$\%$}) & $[0.085]$ & $[0.080]$ & $[0.076]$ & $(0.008)$ & $(0.009)$ & $(0.009)$\\ 
    
    \textit{Average Sales per Machine} & \multicolumn{3}{c}{} & $0.0068$ & $0.0045$ & $0.0023$ \\
    (\textit{Unit: Item}) &  &  &  & $(0.0080)$ & $(0.0097)$ & $(0.0104)$\\ 
    
    \textit{Average Sales Probability per Machine} & \multicolumn{3}{c}{Data anonymized} & $0.0023$ & $0.0016$ & $0.0007$\\
    (\textit{$\%$}) &  &  &  & $(0.0030)$ & $(0.0037)$ & $(0.0037)$\\ 
    
    \textit{Average Inventory per Machine} & \multicolumn{3}{c}{} & $-0.0230$ & $0.0056$ & $-0.0286$\\
    (\textit{Unit: Item}) &  &  &  & $(0.0435)$ & $(0.0541)$ & $(0.0557)$\\ 
    
    \textit{Intervened Machines}  & $2.796$ & $3.108$ & $3.066$ & $0.312$ & $0.270$ & $0.042$ \\
    (\textit{Unit: Machine}) & $[2.952]$ & $[3.038]$ & $[2.963]$ & $(0.284)$ & $(0.346)$ & $(0.349)$\\ 
    
    \textit{\# of Downward SKUs per Machine}  & $1.502$ & $1.710$ & $1.850$ & $0.208$ & $0.348$ & $-0.140$ \\
    (\textit{Unit: SKU}) & $[1.735]$ & $[2.097]$ & $[2.021]$ & $(0.183)$ & $(0.226)$ & $(0.239)$\\ 
    
    $n$ & $221$ & $223$ & $109$ &  &  & \\
    
    \bottomrule \\[-0.8ex]
    
    \multicolumn{7}{l}{\textit{Note.} Standard deviations of each group are in brackets, and standard errors are in parentheses.}\\[0.2ex]
    \multicolumn{7}{l}{\# of downward SKUs per machine excludes shelf with no (0) downward overrides.}\\[0.2ex]
    \multicolumn{7}{l}{$^{*}p<0.1$; $^{**}p<0.05$; $^{***}p<0.01$}\\
    \end{tabular}
\end{table}

\subsection{Proofs of Bayesian Learning Model}\label{appendix: bayesian}

\proof{Proof of \Cref{lemma:conditional_positive}.} Recall the true prior $\eta_k \sim \mathcal{N}(\mu, \sigma_\eta^2)$ and the observed signal structure $s_k = \eta_k + \epsilon_k$ with $\epsilon_k \sim \mathcal{N}(0,\sigma^2)$. By the conjugate property of normal distributions, the posterior expectation of \( \eta_k \) conditional on the observed signal $s_k$ is given by:
\[
\mathbb{E}[\eta_k \mid s_k = s] = \mu + \frac{\sigma_\eta^2}{\tau^2}(s - \mu),
\quad \text{where} \quad \tau^2 = \sigma_\eta^2 + \sigma^2.
\]
Since the posterior expectation is linear in $s_k$, we apply the law of total expectation to obtain:
\begin{equation}\label{eq: posterior expectation}
    \mathbb{E} \left[\eta_k \mid k \in A^{\text{Constrained}} \right] = \mu + \frac{\sigma_\eta^2}{\tau^2} \left( \mathbb{E} \left[s_k \mid k \in A^{\text{Constrained}} \right] - \mu \right).
\end{equation}

To compute \( \mathbb{E} \left[s_k \mid k \in A^{\text{Constrained}} \right] = \mathbb{E} [s_k \mid s_k \ge s_{(K)} ] \), we can reparameterize each signal as \( s_k = \mu + \tau U_k \), where \( U_k \sim \mathcal{N}(0, 1) \). This transformation maps the selection event onto the top-\( K \) values of the standard normal variables \( U_k \), where each observation's rank is uniformly distributed. In particular, we have:
\begin{equation}\label{eq: order stats}
    \mathbb{E} [s_k \mid s_k \ge s_{(K)} ] = \mu + \tau \cdot a(K), \quad \text{where} \quad a(K) = \frac{1}{K} \sum_{j = 1}^K \mathbb{E}[U_{(j)}].
\end{equation}

Substituting \Cref{eq: order stats} into \Cref{eq: posterior expectation}, we obtain:
\[
\mathbb{E} \left[\eta_k \mid k \in A^{\text{Constrained}} \right] = \mu + \frac{\sigma_\eta^2}{\tau^2} \left( \mathbb{E} [s_k \mid s_k \ge s_{(K)} ] - \mu \right) = \mu + \frac{\sigma_\eta^2}{\tau} \cdot a(K).
\]
This is strictly positive if \( a(K) > -\mu \cdot \tau / \sigma_\eta^2 \), which completes the proof. \Halmos
\endproof

\proof{Proof of \Cref{prop:filtering}.}
We compare the average causal effects of three override policies. Under the no override policy, the worker accepts all algorithm recommendations without overrides. Thus, \( A^{\text{No}} = \emptyset \) and \( \mathbb{E}\left[\sum_{k \in A^{\text{No}}} \eta_k \right] = 0 \).

Under the free override policy, the worker overrides SKU \( k \) if \( s_k > 0 \). Similar to \Cref{lemma:conditional_positive}, for any SKU $k$, we have:
\begin{equation*}
    \mathbb{E} \left[\eta_k \mid k \in A^{\text{Free}} \right] = \mathbb{E}[\eta_k \mid s_k > 0] = \mu + \frac{\sigma_\eta^2}{\tau^2} \left( \mathbb{E}[s_k \mid s_k > 0] - \mu \right).
\end{equation*}
Using the standard normal reparameterization \( s_k = \mu + \tau U_k \), with \( U_k \sim \mathcal{N}(0, 1) \), we get:
\begin{equation*}
    \mathbb{E}[s_k \mid s_k > 0] = \mu + \tau \cdot \mathbb{E}[U_k \mid s_k > 0] = \mu + \tau \cdot \mathbb{E}\left[U_k \mid U_k > -\frac{\mu}{\tau}\right] = \mu + \tau \cdot h\left(-\frac{\mu}{\tau}\right),
\end{equation*}
where \( h(x) = \frac{\phi(x)}{1-\Phi(x)} \) denotes the inverse Mills ratio for the standard normal distribution. Furthermore, the override probability of SKU $k$ is $\mathbb{P}(s_k > 0) = \mathbb{P}(U_k > -\mu / \tau) = \Phi(\mu / \tau)$. Aggregating across all SKUs, the expected total override value under free override is:
\[
\mathbb{E}\left[\sum_{k \in A^{\text{Free}}} \eta_k \right] = \sum_{k=1}^N \mathbb{P}(s_k > 0) \mathbb{E}[\eta_k \mid s_k > 0] = N \cdot \Phi\left( \frac{\mu}{\tau} \right) \cdot \left( \mu + \frac{\sigma_\eta^2}{\tau} h\left( -\frac{\mu}{\tau} \right) \right).
\]
This quantity is negative if \( \mu + \frac{\sigma_\eta^2}{\tau} h(-\mu/\tau) < 0 \), which simplifies to \( h(-\mu/\tau) < -\mu \tau / \sigma_\eta^2 \).

Under the constrained override policy, the worker selects \( K \) SKUs with the highest signals $s_k$. Applying \Cref{lemma:conditional_positive}, the expected total override value under constrained override is
\begin{equation*}
    \begin{aligned}
        \mathbb{E}\left[\sum_{k \in A^{\text{Constrained}}} \eta_k \right] &= \sum_{k=1}^N \mathbb{P}( s_k \ge s_{(K)} ) \mathbb{E}[\eta_k \mid  s_k \ge s_{(K)} ] = K \cdot \left( \mu + \frac{\sigma_\eta^2}{\tau} a(K) \right),
    \end{aligned}
\end{equation*}
which is positive if \( a(K) > -\mu \tau / \sigma_\eta^2 \).

Putting all the results together, the desired ordering \( \mathbb{E}[\sum_{k \in A^{\text{Constrained}}} \eta_k] > 0 > \mathbb{E}[\sum_{k \in A^{\text{Free}}} \eta_k] \) holds if
\(
h\left( - \mu / \tau \right) < -\mu \tau / \sigma_\eta^2 < a(K),
\)
as stated in the proposition. \Halmos
\endproof

\begin{remark}
    {Proposition~\ref{prop:filtering} characterizes conditions under which the constrained override policy yields strictly positive expected override value, whereas the free override policy results in negative expected value on average. To build intuition, we explore these conditions under limiting regimes. For the free override policy, the total expected override value is negative if \( h(-\mu/\tau) + \mu \tau / \sigma_\eta^2 < 0\). In the limiting case where $\mu \to 0$, this expression approaches $\sqrt{2/\pi}$, rendering the condition invalid. This outcome aligns with a scenario of unbiased worker beliefs, where the human bias vanishes, and free overrides thus enhance overall performance. Under the constrained override policy, the total expected override value is strictly positive if $a(K) + \mu \tau / \sigma_\eta^2 > 0$. In the extreme ($\mu\to-\infty$ or $\sigma \to +\infty$), the condition inevitably fails, consistent with scenarios in which the worker's prior is excessively negatively biased or their observations are entirely noisy. Under these conditions, the worker possesses no valuable private information, and any override would degrade performance.}
\end{remark}

\proof{Proof of \Cref{prop:optimal_K_ratio}.}
By Lemma~\ref{lemma:conditional_positive}, consider the expected total private information extracted under the constrained override policy as a function of the override cap \( K \):
\[
V(K) = K \cdot \left( \mu + \frac{\sigma_\eta^2}{\tau} a(K) \right),
\]
where \( a(K) = \frac{1}{K} \sum_{j = 1}^{K} \mathbb{E}[U_{(j)}] \), and \( U_k = (s_k - \mu)/\tau \sim \mathcal{N}(0,1) \). Under the free override policy, the expected number of overrides is:
\[
\mathbb{E}[K_{\text{free}}] = N \cdot \mathbb{P}(s_k > 0) = N \cdot \Phi\left( \frac{\mu}{\tau} \right).
\]
Define \( \bar{K} = \lceil \mathbb{E}[K_{\text{free}}] \rceil \) as the smallest integer greater than or equal to this expectation. Our goal is to show \( V(K) \leq 0 \) for \( K \geq \bar{K} \), implying the optimal cap \( K^* \) must lie below \( \bar{K} \).

First, observe that for any $K \ge \bar{K}$:
\begin{equation*}
    a(K) \le a(\bar{K}) \le \mathbb{E} \left[U \Bigl| U > \Phi^{-1} \left( 1 - \frac{\bar{K}}{N} \right) \right] = h \left( \Phi^{-1} \left(1 - \frac{\bar{K}}{N} \right) \right).
\end{equation*}
The first inequality follows from the fact that $a(\cdot)$ is non-increasing in $K$. The second inequality leverages standard properties of order statistics from a standard normal distribution (see \citealt[Equation (4)]{rychlik19986}). Next:
\begin{equation*}
    h\left( \Phi^{-1} \left(1 - \frac{\bar{K}}{N}\right) \right) \le h\left( \Phi^{-1} \left(1 - \Phi\left( \frac{\mu}{\tau} \right) \right) \right) = h\left( -\frac{\mu}{\tau} \right),
\end{equation*}
since $h(\cdot)$ and $\Phi^{-1}(\cdot)$ are both strictly increasing  \citep[Corollary 2]{bagnoli2005log}, and by definition $\bar{K} \ge \mathbb{E}[K_{\text{free}}] = N\cdot\Phi(\mu / \tau)$. Therefore, for \( K \geq \bar{K} \), we have:
\[
\mu + \frac{\sigma_\eta^2}{\tau} a(K) \leq \mu + \frac{\sigma_\eta^2}{\tau} h\left( -\frac{\mu}{\tau} \right) < 0,
\]
which immediately implies \( V(K) \le 0 \). Thus, the optimal cap $K^*$ must be strictly less than $\bar{K}$. \Halmos
\endproof

\subsection{A Rational Inattention Model}\label{appendix: alternative models}
We develop a Rational Inattention (RI) model as an alternative microfoundation for override behavior. Whereas the main model assumes that workers receive private signals of fixed precision and make override decisions subject to an exogenous cardinality constraint, the RI framework \citep{simon1955behavioral, sims2003implications} endogenizes both signal quality and attention allocation. By doing so, it provides a behavioral foundation for a selective override mechanism grounded in cognitive frictions. Crucially, we show that even when signal formation is endogenous, imposing a structural cap on override actions can still improve decision quality through a sorting effect analogous to that in the main model.

To ensure comparability, we retain the same task environment. The worker replenishes a set of \( N \) SKUs indexed by \( k \in \{1, \dots, N\} \). The individual causal effect of overriding SKU \( k \) is \( \eta_k \sim \mathcal{N}(\mu, \sigma^2) \). Consistent with empirical patterns in override performance, we assume \( \mu < 0 \), implying that override decisions are systematically biased downward on average. Departing from the baseline Bayesian setup, we now assume that the worker endogenously chooses a uniform attention level \( \lambda \in \Lambda \subseteq [0, +\infty) \), which determines the precision of the signal observed for each SKU. The observed signal is \( s_k = \eta_k + \epsilon_k \), where \( \epsilon_k \sim \mathcal{N}(0, 1/\lambda) \) captures processing noise, and signal variance is \( \sigma^2 + 1/\lambda \). Greater attention improves signal quality but increases cognitive cost. Following standard RI models \citep{sims2003implications}, we assume that the cost of attention is proportional to mutual information. Since all SKUs share the same attention level, the total cost is \( \kappa N \cdot \frac{1}{2} \log(1 + \lambda \sigma^2) \), where \( \kappa > 0 \) denotes the marginal cost of information.
After observing the signal vector \( \{s_k\} \), the worker chooses a subset \( A \subseteq \{1, \dots, N\} \) of SKUs to override. Consistent with earlier assumptions, we assume the worker ignores the negative prior bias and sets $\mu = 0$ in forming beliefs. The worker’s objective is to maximize expected override value net of cognitive cost:
\[
\max_{\lambda \in \Lambda, A} \quad \mathbb{E}\left[\sum_{k \in A} \eta_k \right] - \kappa N \cdot \tfrac{1}{2} \log(1 + \lambda \sigma^2).
\]

Given a fixed attention level $\lambda$, he worker’s optimal override choices under each policy remain identical to those derived in Section~\ref{subsection: mechanism}. Specifically, let \( A^{\text{Constrained}}(\lambda) \), \( A^{\text{Free}}(\lambda) \), and \( A^{\text{No}}(\lambda) \) denote optimal override sets under constrained, free, and no-override regimes respectively. We summarize the policies as follows:

\begin{itemize}

    \item \emph{No override policy}: the worker must follow all algorithmic recommendations $A^{\text{No}}(\lambda) = \emptyset$ and chooses \( \lambda = 0 \), resulting in no override actions and no attention cost.

    \item \emph{Free override policy}: the worker first sets an attention level $\lambda \in \Lambda$. Based on the observation of the signal $s_k = \eta_k + \epsilon_k$, the worker overrides each SKU with a positive observed signal: \( A^{\text{Free}}(\lambda) = \{k : s_k > 0\} \). Attention level $\lambda$ is chosen to maximize expected net override value.

    \item \emph{Constrained override policy}: after setting the attention level $\lambda \in \Lambda$, the worker chooses at most $K$ overrides by selecting SKUs with the largest positive signals: $A^{\text{Constrained}}(\lambda) = \{k : s_k \ge s_{(K)}, s_k > 0\}$. Similarly, we consider the approximation $A^{\text{Constrained}}(\lambda) \approx \{k : s_k \ge s_{(K)}\}$. Then, \( \lambda \) is optimized to maximize expected net override value.

\end{itemize}

By restricting actions to SKUs with the strongest signals, the constrained policy implements selective filtering, improving override quality. We now formalize the resulting ranking across policies.

\begin{proposition}[Selective Filtering under Rational Inattention]
\label{prop:RI_filtering}
Let \( \lambda^{\text{Constrained}} \), \( \lambda^{\text{Free}} \), and \( \lambda^{\text{No}} \) denote the optimal attention levels under the constrained, free, and no-override policies, respectively. Then the expected total override values satisfy:
\[
\mathbb{E}\left[\sum_{k \in A^{\text{Constrained}}(\lambda^{\text{Constrained}})} \eta_k \right] > \mathbb{E}\left[\sum_{k \in A^{\text{No}}(\lambda^{\text{No}})} \eta_k \right] = 0 > \mathbb{E}\left[\sum_{k \in A^{\text{Free}}(\lambda^{\text{Free}})} \eta_k \right]
\]
if \( h(-\mu/\sqrt{\sigma^2 + 1/\lambda}) < -\mu \sqrt{\sigma^2 + 1/\lambda} / \sigma_\eta^2 < a(K)\) for any $\lambda \in \Lambda$.
\end{proposition}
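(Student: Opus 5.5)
The plan is to reduce Proposition~\ref{prop:RI_filtering} to the fixed-precision analysis of \Cref{prop:filtering}, applied separately at each policy's own optimal attention level. The key observation is that once $\lambda$ is fixed, the RI environment is exactly the Bayesian model of Section~\ref{subsection: mechanism}. The effect $\eta_k$ has mean $\mu$ and prior variance $\sigma_\eta^2$ (here $\sigma^2$, matching the notation of the statement). The judgment-noise variance is replaced by $1/\lambda$, so the signal standard deviation becomes $\tau(\lambda) \coloneqq \sqrt{\sigma^2 + 1/\lambda}$.

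The worker ignores the bias and uses a prior centred at $0$, so her posterior mean is increasing in $s_k$. Hence the override sets $A^{\text{Free}}(\lambda)=\{k: s_k>0\}$ and $A^{\text{Constrained}}(\lambda)\approx\{k: s_k\ge s_{(K)}\}$ coincide with those analysed in \Cref{prop:filtering}.

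First, the no-override case is immediate. We have $A^{\text{No}}(\lambda^{\text{No}})=\emptyset$, and zero attention costs nothing, so the expected value is $0$.

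Second, fix an arbitrary $\lambda\in\Lambda$ and repeat the computations in the proof of \Cref{prop:filtering} with $\tau$ replaced by $\tau(\lambda)$. Reparameterize $s_k=\mu+\tau(\lambda)U_k$ with $U_k\sim\mathcal{N}(0,1)$. Because $\mathbb{E}[\eta_k\mid s_k]$ is linear in $s_k$ under the true distribution, the law of total expectation gives
\[
\mathbb{E}\Bigl[\sum_{k\in A^{\text{Free}}(\lambda)}\eta_k\Bigr]=N\,\Phi\!\left(\tfrac{\mu}{\tau(\lambda)}\right)\left(\mu+\tfrac{\sigma_\eta^2}{\tau(\lambda)}\,h\!\left(-\tfrac{\mu}{\tau(\lambda)}\right)\right),
\]
\[
\mathbb{E}\Bigl[\sum_{k\in A^{\text{Constrained}}(\lambda)}\eta_k\Bigr]=K\left(\mu+\tfrac{\sigma_\eta^2}{\tau(\lambda)}\,a(K)\right).
\]
In the constrained formula, $a(K)$ is the same order-statistic average as in \Cref{lemma:conditional_positive}, and it does not depend on $\lambda$ because the $U_k$ are standard normal. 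The hypothesis $h(-\mu/\tau(\lambda))<-\mu\tau(\lambda)/\sigma_\eta^2<a(K)$ then makes the first bracket negative and the second positive, exactly as in \Cref{prop:filtering}. Both signs therefore hold for every $\lambda\in\Lambda$.

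Third, since the sign conclusions are uniform in $\lambda$, they hold in particular at $\lambda=\lambda^{\text{Free}}$ and at $\lambda=\lambda^{\text{Constrained}}$, whatever those optimizers turn out to be. Combining with the no-override case yields the stated chain of inequalities. This is precisely why the condition is imposed for all $\lambda\in\Lambda$: it lets us avoid characterizing the optimal attention levels explicitly.

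The main obstacle is this last step, namely making sure the optimizers exist and lie in $\Lambda$ so that the uniform condition can be evaluated there. I would first handle this by assuming $\Lambda$ is compact or finite, or by noting that the objective is continuous in $\lambda$ on $(0,\infty)$. A second subtlety is the corner $\lambda=0$ in the free or constrained policy, where $\tau(\lambda)=\infty$. I would either exclude it from $\Lambda$ for these policies or treat it as a limiting case in which the signal carries no information. Under the approximation $A^{\text{Constrained}}\approx\{k:s_k\ge s_{(K)}\}$, the constrained value then tends to $K\mu<0$. So $\lambda=0$ must be ruled out, and the hypothesis implicitly does this, because $-\mu\tau(\lambda)/\sigma_\eta^2<a(K)$ forces $\tau(\lambda)$, and hence $1/\lambda$, to be finite.
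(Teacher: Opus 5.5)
Your proposal is correct and matches the paper's proof. For each fixed $\lambda\in\Lambda$ you reduce to the fixed-precision computation of \Cref{prop:filtering} with $\tau$ replaced by $\sqrt{\sigma^2+1/\lambda}$, and because the hypothesis holds for every $\lambda\in\Lambda$, the sign conclusions carry over to whatever $\lambda^{\text{Constrained}}$ and $\lambda^{\text{Free}}$ are selected; your worry about existence of optimizers is moot, since the statement takes them as given elements of $\Lambda$, while your remark that the hypothesis itself excludes $\lambda=0$ is correct but not needed.
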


\proof{Proof of \Cref{prop:RI_filtering}}
By \Cref{prop:filtering}, for any fixed attention level $\lambda \in \Lambda$, if \( a(K) > -\mu \sqrt{\sigma^2 + 1/\lambda} / \sigma_\eta^2 \), then:
\[
\mathbb{E}\left[\sum_{k \in A^{\text{Constrained}}(\lambda)} \eta_k \right] > 0.
\]
Selecting the optimal attention level \(\lambda^{\text{Constrained}}\), which maximizes the worker's objective under the constrained override regime, immediately yields the stated result. A parallel argument applies directly to the free override regime, establishing the negative expected value under optimal attention $\lambda^{\text{Free}}$.
\Halmos
\endproof

\section{Additional Analyses for Section~\ref{section: results}}\label{appendix: main results}

Figure~\ref{fig: model free} plots the daily average machine-level sales (top panel) and inventory levels (bottom panel). Each point represents the average value across all treated machines operated by active workers in that group on a given day. The data span April $1$ to June $18$, $2023$, with the vertical line indicating May $4$, $2023$, the start of the experimental override policies. Prior to treatment, the three groups exhibit parallel trends, consistent with the random assignment of override policies at the worker level. After the intervention, meaningful differences emerge. The constrained override group achieves the highest average sales while maintaining moderate inventory levels. The free override group holds the lowest inventory but suffers from lower sales, suggesting excessive reductions. 

\begin{figure}[htbp]
    \FIGURE
    {\includegraphics[width=0.8\textwidth]{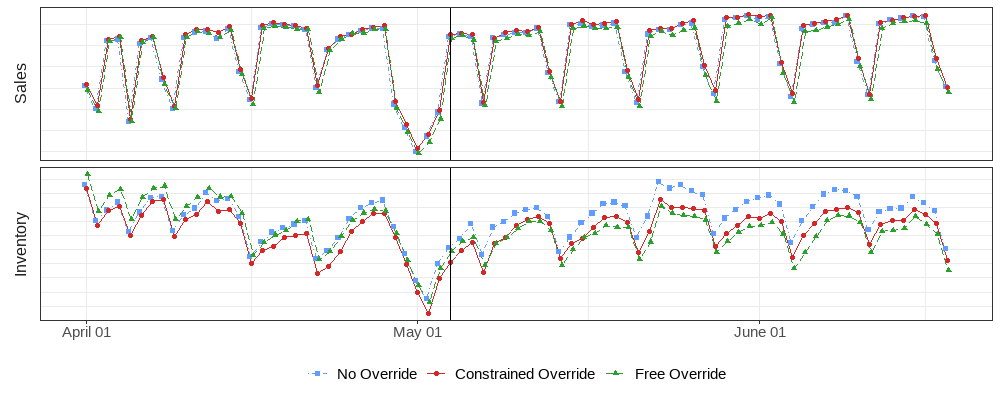}}
    {Model-Free Evidence ($y$-axis Anonymized) \label{fig: model free}}
    {}
\end{figure}

\subsection{Main Policy Effect}
\label{appendix: main result}
This section reports several robustness checks to support our main findings in \Cref{section: results}. Table~\ref{table: main results robustness2} presents results obtained by re-estimating our models using level outcomes rather than log-transformed variables. The results remain consistent with our main findings. Constrained override causes a statistically significant reduction in inventory ($-0.0860$, $p < 0.05$) and a small, positive, but statistically insignificant effect on sales. In contrast, free override leads to both a significant inventory reduction ($-0.1452$, $p < 0.01$) and a significant decline in sales ($-0.0327$, $p < 0.05$). These patterns reinforce the conclusion that while both override policies reduce inventory, only the constrained override avoids the negative sales impact observed under free override.

\begin{table}[htbp] 
    \linespread{0.5}
    \centering 
    \footnotesize
      \caption{Robustness Check (Machine-Level by Policy Group)} 
      \label{table: main results robustness2} 
    \begin{tabular}{@{\extracolsep{5pt}}lcc} 
    \toprule
    
     & \textit{Sales} & \textit{Inventory}\\
     & (\textit{Unit: Item}) & (\textit{Unit: Item})\\
    
    \midrule
    
     $Constrained\ Override_i \times Post\ Treatment_{t}$ & $0.0084$ & $-0.0860^{**}$\\ 
     $ $ & $(0.0169)$ & $(0.0408)$\\ [1ex]
     $Free\ Override_i \times Post\ Treatment_{t}$ & $-0.0327^{**}$ & $-0.1452^{***}$\\ 
     $ $ & $(0.0162)$ & $(0.0524)$\\
    
    \midrule
    
    Worker fixed effects & Yes & Yes\\[1ex]
    (City-Date) fixed effect & Yes & Yes\\[1ex]
    R$^{2}$ & $0.1739$ & $0.1039$\\[1ex]
    Observations & $639,112$ & $639,112$\\ 
    
    \bottomrule\\[-0.8ex]
    
    \multicolumn{3}{l}{\textit{Note.} Robust standard errors clustered by each worker are in parentheses.}\\
    \multicolumn{3}{l}{$^{*}p<0.1$; $^{**}p<0.05$; $^{***}p<0.01$} \\ 
    \end{tabular} 
\end{table}

Table~\ref{table: main results robust} verifies robustness to temporal variation by excluding weekends and public holidays. The results again support our main claims. Specifically, constrained override has no significant effect on sales (measured either as log quantity or as binary sales probability), but it continues to reduce inventory significantly ($-0.0117$, $p < 0.05$). Free override, by contrast, results in a statistically significant decline across all three metrics, log sales ($-0.0128$, $p < 0.01$), sales probability ($-0.0145$, $p < 0.01$), and inventory ($-0.0220$, $p < 0.01$). The consistent negative effects of free override on sales across specifications further underscore its unintended consequence of reduced performance, while highlighting the effectiveness of constrained override in improving efficiency without negatively impacting sales.

\begin{table}[!htbp] 
    \linespread{0.5}
    \centering 
    \footnotesize
      \caption{Robustness Check (Exclude Weekends and Holidays)} 
      \label{table: main results robust}
    \begin{tabular}{@{\extracolsep{5pt}}lccc} 
    \toprule
    
     & $\log(\textit{Sales}+1)$ & \textit{Sales Probability} & $\log(\textit{Inventory}+1)$\\
     & (\textit{Unit: Item}) & ($\%$) & (\textit{Unit: Item})\\
    
    \midrule
    
     $Constrained\ Override_i \times Post\ Treatment_{t}$ & $0.0018$ & $-0.0022$ & $-0.0117^{**}$\\ 
     $ $ & $(0.0039)$ & $(0.0028)$ & $(0.0050)$\\ 
     $Free\ Override_i \times Post\ Treatment_{t}$ & $-0.0128^{***}$ & $-0.0145^{***}$ & $-0.0220^{***}$\\ 
     $ $ & $(0.0042)$ & $(0.0033)$ & $(0.0065)$\\ 
    
    \midrule
    
    Worker fixed effects & Yes & Yes & Yes\\
    (City-Date) fixed effect & Yes & Yes & Yes\\
    R$^{2}$ & $0.1651$ & $0.1881$ & $0.1681$\\
    Observations & $532,592$ & $532,592$ & $532,592$\\ 
    
    \bottomrule\\[-0.8ex]
    
    \multicolumn{4}{l}{\textit{Note.} Robust standard errors clustered by each worker are in parentheses.}\\
    \multicolumn{4}{l}{$^{*}p<0.1$; $^{**}p<0.05$; $^{***}p<0.01$} \\ 
    \end{tabular} 
\end{table}

\begin{table}[b]
    \linespread{0.5}
    \centering 
    \scriptsize
    \caption{Causal Impact of Free and Constrained Override Policies (Machine-Level Weekly DID Estimates)} 
    \label{table:main_results_by_week} 
    \begin{tabular}{@{\extracolsep{5pt}}lccc} 
    \toprule
    
     & $\log(\textit{Sales}+1)$ &  \textit{Sales Probability} & $\log(\textit{Inventory}+1)$\\[0.5ex]
     & (\textit{Unit: Item}) & ($\%$) & (\textit{Unit: Item})\\
    
    \midrule
    
    $Constrained\ Override_i \times Pre \ Week\ 1_t$ & $0.0008$ & $-0.0022$ & $-0.0037$\\ 
     & $(0.0048)$ & $(0.0029)$ & $(0.0059)$\\ 
    
    $Constrained\ Override_i \times Post\ Week\ 0_t$ & $0.0028$ & $-0.0001$ & $-0.0155^{***}$\\ 
     & $(0.0031)$ & $(0.0029)$ & $(0.0054)$\\ 
    
    $Constrained\ Override_i \times Post\ Week\ 1_t$ & $0.0016$ & $-0.0007$ & $-0.0045$\\ 
     & $(0.0038)$ & $(0.0029)$ & $(0.0061)$\\ 
    
    $Constrained\ Override_i \times Post\ Week\ 2_t$ & $-0.0027$ & $-0.0036$ & $-0.0147^{**}$\\ 
     & $(0.0041)$ & $(0.0032)$ & $(0.0064)$\\ 
    
    $Constrained\ Override_i \times Post\ Week\ 3_t$ & $-0.0006$ & $-0.0030$ & $-0.0160^{**}$\\ 
     & $(0.0047)$ & $(0.0034)$ & $(0.0065)$\\ 
    
    $Constrained\ Override_i \times Post\ Week\ 4_t$ & $-0.0003$ & $-0.0067^{*}$ & $-0.0130^{*}$\\ 
     & $(0.0062)$ & $(0.0038)$ & $(0.0068)$\\ 
    
    $Constrained\ Override_i \times Post\ Week\ 5_t$ & $0.0013$ & $-0.0032$ & $-0.0162^{**}$\\ 
     & $(0.0063)$ & $(0.0050)$ & $(0.0076)$\\ 
    
    $Constrained\ Override_i \times Post\ Week\ 6_t$ & $0.0099$ & $0.0050$ & $-0.0128$\\ 
     & $(0.0097)$ & $(0.0076)$ & $(0.0092)$\\ 
    
    $Free\ Override_i \times Pre \ Week\ 1_t$ & $-0.0010$ & $-0.0036$ & $-0.0060$\\ 
     & $(0.0049)$ & $(0.0029)$ & $(0.0064)$\\ 
    
    $Free\ Override_i \times Post\ Week\ 0_t$ & $-0.0055$ & $-0.0063^{*}$ & $-0.0154^{**}$\\ 
     & $(0.0039)$ & $(0.0034)$ & $(0.0060)$\\ 
    
    $Free\ Override_i \times Post\ Week\ 1_t$ & $-0.0039$ & $-0.0088^{***}$ & $-0.0152^{**}$\\ 
     & $(0.0044)$ & $(0.0033)$ & $(0.0075)$\\ 
    
    $Free\ Override_i \times Post\ Week\ 2_t$ & $-0.0115^{**}$ & $-0.0166^{***}$ & $-0.0109$\\ 
     & $(0.0051)$ & $(0.0039)$ & $(0.0084)$\\ 
    
    $Free\ Override_i \times Post\ Week\ 3_t$ & $-0.0153^{***}$ & $-0.0177^{***}$ & $-0.0136^{*}$\\ 
     & $(0.0050)$ & $(0.0040)$ & $(0.0081)$\\ 
    
    $Free\ Override_i \times Post\ Week\ 4_t$ & $-0.0171^{***}$ & $-0.0194^{***}$ & $-0.0260^{***}$\\ 
     & $(0.0060)$ & $(0.0047)$ & $(0.0085)$\\ 
    
    $Free\ Override_i \times Post\ Week\ 5_t$ & $-0.0189^{***}$ & $-0.0199^{***}$ & $-0.0343^{***}$\\ 
     & $(0.0072)$ & $(0.0060)$ & $(0.0102)$\\ 
    
    $Free\ Override_i \times Post\ Week\ 6_t$ & $-0.0114$ & $-0.0077$ & $-0.0333^{***}$\\ 
     & $(0.0104)$ & $(0.0089)$ & $(0.0108)$\\ 
    
    \midrule
    
    Worker fixed effects & Yes & Yes & Yes\\
    (City-Date) fixed effect & Yes & Yes & Yes\\
    R$^{2}$ & $0.1751$ & $0.2073$ & $0.1708$\\
    Observations & $639{,}112$ & $639{,}112$ & $639{,}112$\\ 
    
    \bottomrule\\[-0.8ex]
    
    \multicolumn{4}{l}{\textit{Note.} Robust standard errors clustered by each worker are in parentheses.}\\
    \multicolumn{4}{l}{$^{*}p<0.1$; $^{**}p<0.05$; $^{***}p<0.01$} \\ 
    \end{tabular} 
\end{table}

\Cref{table:main_results_by_week} reports the results of the following weekly DID model:
\begin{equation}\label{eq:did_machine_dynamic}
    \begin{aligned}
        \textit{Outcome}_{ijt} = \alpha 
        &+ \beta_{1,-1} (\textit{ConstrainedOverride}_i \cdot \textit{PreWeek1}_t)
        + \sum_{w=0}^{6} \delta_{1w} (\textit{ConstrainedOverride}_i \cdot \textit{PostWeek}_{wt}) \\
        &+ \beta_{2,-1}  (\textit{FreeOverride}_i \cdot \textit{PreWeek1}_t)
        + \sum_{w=0}^{6} \delta_{2w} (\textit{FreeOverride}_i \cdot \textit{PostWeek}_{wt}) \\
        &+ \gamma_i + \delta_{c(j)t} + \varepsilon_{ijt}.
    \end{aligned}
\end{equation}
Specifically, we interact each treatment group indicator with week-specific dummies for the seven weeks following the intervention, along with a pre-treatment lag term to assess potential pre-trends. We find no evidence of significant pre-treatment differences, further validating the randomization procedure. After the intervention, the free override group exhibits an immediate and sustained decline in both sales and sales probability, while the constrained group achieves a meaningful reduction in inventory without reducing sales. The persistence of these effects suggests that override behavior reflects systematic overrides in decision-making rather than transitory noise.

Corresponding to the analyses in \Cref{subsection: override analysis}, Table~\ref{table: algorithm suggestion weekly did} reports weekly DID estimates for algorithm suggestions. The constrained override group shows no significant change throughout. The free override group exhibits a marked increase in suggested quantity per machine beginning after Week 0. These findings suggest that the algorithm adapts to aggressive quantity reductions under free override by increasing future replenishment recommendations, without broadening the product mix. This supports our interpretation that the algorithm response is not a general reaction to override policy, but rather a targeted response to systematic over-reduction.

\begin{table}[htbp]
    \linespread{0.5}
    \centering 
    \scriptsize
    \caption{Effect of Override Policies on Algorithm Suggestions (Machine-Level Weekly DID Estimates)} 
    \label{table: algorithm suggestion weekly did} 
    \begin{tabular}{@{\extracolsep{5pt}}lccc} 
    \toprule
    
     & Quantity per Machine & SKU per Machine & Quantity per SKU \\
     & (\textit{Unit: Item}) & ($\%$) & (\textit{Unit: Item})\\
    
    \midrule
    
    $Constrained\ Override_i \times Post\ Week\ 0_t$ & $0.0199$ & $0.0257$ & $0.0213$\\ 
     & $(1.0510)$ & $(0.1979)$ & $(0.0393)$\\ 
    
    $Constrained\ Override_i \times Post\ Week\ 1_t$ & $1.1871$ & $-0.0052$ & $0.0599$\\ 
     & $(1.0766)$ & $(0.2096)$ & $(0.0441)$\\ 
    
    $Constrained\ Override_i \times Post\ Week\ 2_t$ & $0.7990$ & $-0.0360$ & $0.0525$\\ 
     & $(1.0643)$ & $(0.1966)$ & $(0.0408)$\\ 
    
    $Constrained\ Override_i \times Post\ Week\ 3_t$ & $1.9995^{*}$ & $0.1982$ & $0.0703$\\ 
     & $(1.1639)$ & $(0.2159)$ & $(0.0470)$\\ 
    
    $Constrained\ Override_i \times Post\ Week\ 4_t$ & $0.7336$ & $-0.0392$ & $0.0571$\\ 
     & $(1.3331)$ & $(0.2330)$ & $(0.0499)$\\ 
    
    $Constrained\ Override_i \times Post\ Week\ 5_t$ & $0.0005$ & $-0.1597$ & $0.0473$\\ 
     & $(1.3461)$ & $(0.2449)$ & $(0.0513)$\\ 
    
    $Constrained\ Override_i \times Post\ Week\ 6_t$ & $0.7487$ & $-0.1849$ & $0.1433^{**}$\\ 
     & $(1.5158)$ & $(0.2770)$ & $(0.0619)$\\ 
    
    $Free\ Override_i \times Post\ Week\ 0_t$ & $0.7786$ & $0.0452$ & $0.1196^{*}$\\ 
     & $(1.5799)$ & $(0.2828)$ & $(0.0617)$\\ 
    
    $Free\ Override_i \times Post\ Week\ 1_t$ & $4.3356^{**}$ & $0.4092$ & $0.2525^{***}$\\ 
     & $(1.7832)$ & $(0.3208)$ & $(0.0729)$\\ 
    
    $Free\ Override_i \times Post\ Week\ 2_t$ & $3.6646^{**}$ & $0.1849$ & $0.2223^{***}$\\ 
     & $(1.8503)$ & $(0.3166)$ & $(0.0738)$\\ 
    
    $Free\ Override_i \times Post\ Week\ 3_t$ & $3.3012^{*}$ & $0.1934$ & $0.1822^{**}$\\ 
     & $(1.7478)$ & $(0.3116)$ & $(0.0746)$\\ 
    
    $Free\ Override_i \times Post\ Week\ 4_t$ & $3.5809^{*}$ & $0.1388$ & $0.2437^{***}$\\ 
     & $(2.0951)$ & $(0.3521)$ & $(0.0818)$\\ 
    
    $Free\ Override_i \times Post\ Week\ 5_t$ & $3.8801^{*}$ & $0.2758$ & $0.2543^{***}$\\ 
     & $(2.0326)$ & $(0.3683)$ & $(0.0829)$\\ 
    
    $Free\ Override_i \times Post\ Week\ 6_t$ & $4.4855^{**}$ & $0.3279$ & $0.3053^{***}$\\ 
     & $(2.2517)$ & $(0.3931)$ & $(0.0919)$\\ 
    
    \midrule
    
    Worker fixed effects & Yes & Yes & Yes\\
    (City-Date) fixed effect & Yes & Yes & Yes\\
    R$^{2}$ & $0.1938$ & $0.2100$ & $0.2105$\\
    Observations & $639{,}112$ & $639{,}112$ & $639{,}112$\\ 
    
    \bottomrule\\[-0.8ex]
    
    \multicolumn{4}{l}{\textit{Note.} Robust standard errors clustered by each worker are in parentheses.}\\
    \multicolumn{4}{l}{$^{*}p<0.1$; $^{**}p<0.05$; $^{***}p<0.01$} \\ 
    \end{tabular} 
\end{table}

\subsection{Changes in Override Behavior}
\label{subsection: main result}

\begin{figure}[t]
    \FIGURE
    {\includegraphics[width=0.7\linewidth]{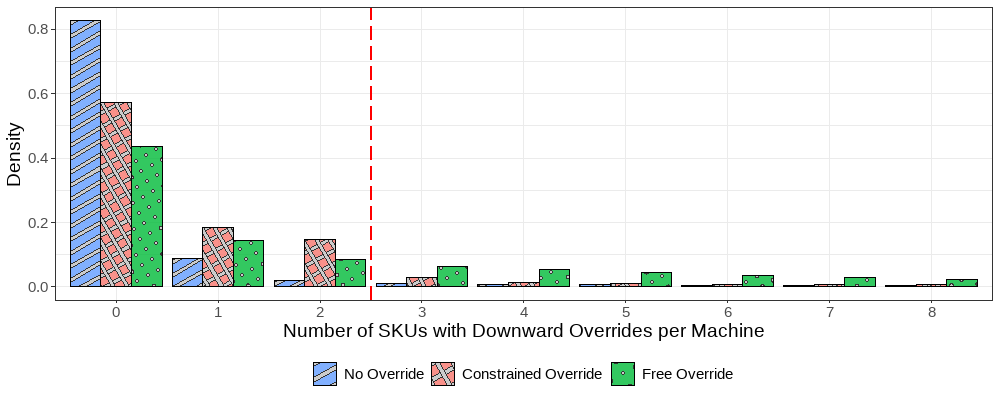}}
    {Number of Downward SKU Overrides per Machine-day by Policies \label{fig: Downward SKU}}
    {The histogram includes shelves with no ($0$) downward overrides.}
\end{figure}

\Cref{fig: Downward SKU} shows the distribution of downward overrides per machine-day across policy groups. As expected, the no override group exhibits near-zero override activity, with rare exceptions likely due to stockouts or edge cases. The free override group frequently overrides multiple SKUs, while the constrained group shows a sharp concentration at one to two overrides, consistent with the policy cap of $K=2$. Again, although the cap is not strictly enforced, instances of exceedance are rare and symmetrically distributed across groups due to worker-level randomization.

\section{Additional Analyses from Section~\ref{section:selective filtering}} 
\label{appendix:selective filtering test}
This section provides several robustness checks to support the main LATE findings. First, Table~\ref{table:itt down} reports intent-to-treat (ITT) estimates using policy assignment as the regressor. The ITT results follow the same pattern as the LATE estimates: the constrained policy reduces inventory ($-0.099$, $p<0.01$) without affecting sales, while the free policy reduces inventory ($-0.163$, $p<0.01$) but simultaneously lowers the probability of sale ($-0.0136$, $p<0.01$). As expected, the ITT effects are smaller in magnitude, but they reinforce the conclusion that constrained overrides improve performance through selective filtering, while free overrides degrade it.

\begin{table}[htbp]
    \linespread{0.5}
    \centering
    \footnotesize
    \caption{ITT with Downward Override}
    \label{table:itt down}
    \begin{tabular}{lccc}
    \toprule
    & \textit{Sales} & \textit{Sales01} & \textit{Inventory} \\
    & (\textit{Unit: Item}) & ($\%$) & (\textit{Unit: Item})\\
    \midrule 
    
    \textbf{\textit{Constrained vs No Override}} \\
    
    \midrule 
    
    $Constrained\ Override_i \cdot Post_{t}$ & $0.0081$  & $-0.0013$ & $-0.0990^{***}$ \\
                                    & ($0.0089$) & ($0.0027$) & ($0.0262$) \\
    (Worker-Machine), SKU, (City-Date) FE   & Yes & Yes & Yes \\
    R$^2$                           & $0.2881$ & $0.3621$ & $0.3198$   \\
    Observations                    & $7,012,190$ & $7,012,190$ & $7,012,190$   \\
    
    \midrule
    
    \textbf{\textit{Free vs No Override}} \\
    
    \midrule 
    
    $Free\ Override_i \cdot Post_{t}$ & $-0.0154$  & $-0.0136^{***}$ & $-0.1625^{***}$ \\
                                    & ($0.0103$) & ($0.0033$) & ($0.0395$) \\
    (Worker-Machine), SKU, (City-Date) FE   & Yes & Yes & Yes \\
    R$^2$                           & $0.2806$ & $0.3545$ & $0.3183$   \\
    Observations                    & $5,189,189$ & $5,189,189$ & $5,189,189$   \\
    
    \bottomrule\\[-0.8ex]
    \multicolumn{4}{l}{\footnotesize Robust standard errors clustered at the worker level.} \\
    \multicolumn{4}{l}{\footnotesize $^{*}p<0.1$; $^{**}p<0.05$; $^{***}p<0.01$}
    \end{tabular}
\end{table}

Second, we redefine the treatment to focus on a stricter class of override decisions, those that reduce inventory to exactly zero, while keeping the full sample unchanged. As shown in Table~\ref{table:late down0}, the estimated effects are substantially larger in magnitude, as expected. Under the constrained policy, zero-out overrides reduce inventory by an average of $8.85$ items ($p<0.01$), compared to $3.45$ items in the main specification, with no significant sales loss. Under the free policy, inventory falls by $5.22$ items ($p<0.01$) and sales probability drops by $43.8$ percentage points ($p<0.01$). The magnitude of the inventory reduction is substantially larger under this stricter definition, $8.85$ items versus $3.45$ in the baseline, confirming that zero-out overrides represent more decisive and high-impact interventions. Likewise, the negative sales effect under free override intensifies from $10.6$ to $43.8$ percentage points, highlighting the risks associated with unconstrained discretion in overaggressive overrides.

\begin{table}[htbp]
    \linespread{0.5}
    \centering
    \footnotesize
    \caption{LATE with Downward Override to Zero Inventory}
    \label{table:late down0}
    \begin{tabular}{lccc}
    \toprule
    & \textit{Sales} & \textit{Sales01} & \textit{Inventory} \\
    & (\textit{Unit: Item}) & ($\%$) & (\textit{Unit: Item})\\
    \midrule 
    
    \textbf{\textit{Constrained vs No Override}} \\
    
    \midrule 
    \multicolumn{4}{l}{\textit{First Stage (Outcome: DownOverride0)}} \\
    $ConstrainedOverride_i \cdot Post_{t}$ & \multicolumn{3}{c}{$ 0.0112^{***} \quad (0.0029)$} \\
    (Worker-Machine), SKU, (City-Date) FE  &\multicolumn{3}{c}{Yes} \\
    R$^2$    & \multicolumn{3}{c}{$0.3210$}     \\
    F-stat   & \multicolumn{3}{c}{$2738.5^{***}$} \\
    Observations   &\multicolumn{3}{c}{$7,012,190$} \\
    \\
    \multicolumn{4}{l}{\textit{Second Stage (LATE)}} \\
    $\widehat{DownOverride0}_{ijkt}$ & $0.7261$  & $-0.1136$ & $-8.8548^{***}$ \\
                                    & ($0.8002$) & ($0.2407$) & ($2.3431$) \\
    (Worker-Machine), SKU, (City-Date) FE   & Yes & Yes & Yes \\
    R$^2$                           & $0.2881$ & $0.3621$ & $0.3198$   \\
    Observations                    & $7,012,190$ & $7,012,190$ & $7,012,190$   \\
    
    \midrule
    
    \textbf{\textit{Free vs No Override}} \\
    
    \midrule 
    \multicolumn{4}{l}{\textit{First Stage (Outcome: DownOverride0)}} \\
    $FreeOverride_i \cdot Post_{t}$ & \multicolumn{3}{c}{$ 0.0312^{***} \quad (0.0049)$} \\
    (Worker-Machine), SKU, (City-Date) FE  &\multicolumn{3}{c}{Yes} \\
    R$^2$    & \multicolumn{3}{c}{$0.2691$}     \\
    F-stat   & \multicolumn{3}{c}{$12,538.5^{***}$} \\
    Observations   &\multicolumn{3}{c}{$5,189,189$} \\
    \\
    \multicolumn{4}{l}{\textit{Second Stage (LATE)}} \\
    $\widehat{DownOverride0}_{ijkt}$ & $-0.4929$  & $-0.4377^{***}$ & $-5.2162^{***}$ \\
                                    & ($0.3309$) & ($0.1067$) & ($1.2675$) \\
    (Worker-Machine), SKU, (City-Date) FE   & Yes & Yes & Yes \\
    R$^2$                           & $0.2806$ & $0.3545$ & $0.3183$   \\
    Observations                    & $5,189,189$ & $5,189,189$ & $5,189,189$   \\
    
    \bottomrule\\[-0.8ex]
    \multicolumn{4}{l}{\footnotesize Robust standard errors clustered at the worker level.} \\
    \multicolumn{4}{l}{\footnotesize $^{*}p<0.1$; $^{**}p<0.05$; $^{***}p<0.01$}
    \end{tabular}
\end{table}

Third, Table~\ref{table:upward itt} reports ITT estimates on a restricted sample consisting only of upward override instances. This specification tests whether override policies influence performance outcomes for SKUs whose replenishment quantities were increased rather than decreased. All coefficients are small and statistically insignificant across sales, sales probability, and inventory, indicating no meaningful effect of policy assignment on SKU performance in the upward direction. This null result helps rule out alternative explanations based on general override effort or broader behavioral changes, reinforcing that the main effects are specific to downward override behavior and consistent with selective filtering rather than general override activity.

\begin{table}[htbp]
    \linespread{0.5}
    \centering
    \footnotesize
    \caption{ITT (Upward Override Only)}
    \label{table:upward itt}
    \begin{tabular}{lccc}
    \toprule
    & \textit{Sales} & \textit{Sales01} & \textit{Inventory} \\
    & (\textit{Unit: Item}) & ($\%$) & (\textit{Unit: Item})\\
    \midrule 
    
    \textbf{\textit{Constrained vs No Override}} \\
    
    \midrule 
    
    $ConstrainedOverride_i \cdot Post_{t}$ & $0.0380$  & $0.0002$ & $-0.0640$ \\
                                    & ($0.0277$) & ($0.0036$) & ($0.0549$) \\
    (Worker-Machine), SKU, (City-Date) FE   & Yes & Yes & Yes \\
    R$^2$                           & $0.3235$ & $0.1980$ & $0.3787$   \\
    Observations                    & $1,639,664$ & $1,639,664$ & $1,639,664$   \\
    
    \midrule
    
    \textbf{\textit{Free vs No Override}} \\
    
    \midrule 
    
    $FreeOverride_i \cdot Post_{t}$ & $-0.0290$  & $-0.0002$ & $-0.0294$ \\
                                    & ($0.0217$) & ($0.0044$) & ($0.0716$) \\
    (Worker-Machine), SKU, (City-Date) FE   & Yes & Yes & Yes \\
    R$^2$                           & $0.3077$ & $0.1994$ & $0.3648$   \\
    Observations                    & $1,187,107$ & $1,187,107$ & $1,187,107$   \\
    
    \bottomrule\\[-0.8ex]
    \multicolumn{4}{l}{\footnotesize Robust standard errors clustered at the worker level.} \\
    \multicolumn{4}{l}{\footnotesize $^{*}p<0.1$; $^{**}p<0.05$; $^{***}p<0.01$}
    \end{tabular}
\end{table}

Finally, \Cref{table:late down with control} augments the LATE specification by including instance-level covariates. As we can see, the estimated inventory reduction under constrained override remains large and significant ($-3.50$ items, $p < 0.01$), nearly identical to the baseline estimate ($-3.45$ in Table~\ref{table:late down}), indicating that workers act on information beyond algorithm-accessible features. In contrast, for the free override policy, inventory reductions become only modestly weaker ($-1.48$ vs. $-1.26$), but the negative sales effect remains large and significant ($-11.0$ percentage points, $p < 0.01$), suggesting that unconstrained override decisions continue to degrade performance even after conditioning on observed SKU characteristics.

\begin{table}[ht]
    \linespread{0.5}
    \centering
    \small
    \scriptsize
    \caption{LATE with Downward Override Controlling for Observable Characteristics}
    \label{table:late down with control}
    \begin{tabular}{lccc}
    \toprule
    & \textit{Sales} & \textit{Sales Probability} & \textit{Inventory} \\
    & (\textit{Unit: Item}) & ($\%$) & (\textit{Unit: Item})\\
    \midrule 
    
    \textbf{\textit{Constrained vs No Override}} \\
    
    \midrule 
    \multicolumn{4}{l}{\textit{First Stage (Outcome: DownOverride)}} \\
    $ConstrainedOverride_i \cdot Post_{t}$ & \multicolumn{3}{c}{$0.0287^{***}$ ($0.0068$)} \\
    \textit{Discounted}  & \multicolumn{3}{c}{$-0.0070^{***}$ ($0.0011$)} \\
    \textit{Sale Price}  & \multicolumn{3}{c}{$0.0035^{***}$ ($0.0009$)} \\
    \textit{Danger Flag}  & \multicolumn{3}{c}{$-0.0069^{***}$ ($0.0010$)} \\
    \textit{Growth SKU}  & \multicolumn{3}{c}{$-0.0080^{***}$ ($0.0012$)} \\
    \textit{Top SKU} $\cdot$ \textit{Top Shelf}  & \multicolumn{3}{c}{$-0.0188^{***}$ ($0.0020$)} \\
    \textit{Top SKU}  & \multicolumn{3}{c}{$0.0357^{***}$ ($0.0032$)} \\
    \textit{Top Shelf}  & \multicolumn{3}{c}{$0.0097^{***}$ ($0.0021$)} \\
    \textit{Tenure}  & \multicolumn{3}{c}{$0.1158$ ($0.3584$)} \\
    
    (Worker-Machine), SKU, (City-Date) FE  &\multicolumn{3}{c}{Yes} \\
    R$^2$    & \multicolumn{3}{c}{$0.3297$}     \\
    F-stat   & \multicolumn{3}{c}{$3,575.6^{***}$} \\
    Observations   &\multicolumn{3}{c}{$7,012,190$} \\
    \\
    \multicolumn{4}{l}{\textit{Second Stage (LATE)}} \\
    $\widehat{DownOverride}_{ijkt}$ & $0.2703$ ($0.3109$) & $-0.0469$ ($0.0949$) & $-3.4958^{***}$ ($0.8496$) \\
    \textit{Discounted} & $-0.0009$ ($0.0054$) & $0.0093^{***}$ ($0.0011$) & $0.8500^{***}$ ($0.0291$) \\
    \textit{Sale Price}  & $-0.1199^{***}$ ($0.0066$) & $-0.0098^{***}$ ($0.0005$) & $-0.4889^{***}$ ($0.0183$) \\
    \textit{Danger Flag}  & $-0.0024$ ($0.0044$) & $0.0084^{***}$ ($0.0010$) & $0.7205^{***}$ ($0.0218$) \\
    \textit{Growth SKU}  & $-0.0960^{***}$ ($0.0056$) & $-0.0101^{***}$ ($0.0015$) & $-0.6266^{***}$ ($0.0208$) \\
    \textit{Top SKU} $\cdot$ \textit{Top Shelf} & $0.1885^{***}$ ($0.0107$) & $-0.0569^{***}$ ($0.0022$) & $-0.8131^{***}$ ($0.0345$) \\
    \textit{Top SKU}  & $0.3055^{***}$ ($0.0124$) & $0.0887^{***}$ ($0.0036$) & $3.8991^{***}$ ($0.0457$) \\
    \textit{Top Shelf}  & $-0.0947^{***}$ ($0.0071$) & $0.0419^{***}$ ($0.0022$) & $0.2845^{***}$ ($0.0262$) \\
    \textit{Tenure} & $0.5528$ ($1.0395$) & $0.0227$ ($0.1210$) & $0.2060$ ($1.0550$) \\
    
    (Worker-Machine), SKU, (City-Date) FE   & Yes & Yes & Yes \\
    R$^2$   & $0.3022$ &$0.3698$  &  $0.4133$   \\
    Observations                    & $7,012,190$ & $7,012,190$ & $7,012,190$   \\
    
    \midrule
    
    \textbf{\textit{Free vs No Override}} \\
    
    \midrule 
    \multicolumn{4}{l}{\textit{First Stage (Outcome: DownOverride)}} \\
    $FreeOverride_i \cdot Post_{t}$ & \multicolumn{3}{c}{$0.1292^{***}$ ($0.0140$)} \\
    \textit{Discounted}  & \multicolumn{3}{c}{$-0.0076^{***}$ ($0.0014$)} \\
    \textit{Sale Price}  & \multicolumn{3}{c}{$0.0036^{***}$ ($0.0011$)} \\
    \textit{Danger Flag}  & \multicolumn{3}{c}{$-0.0077^{***}$ ($0.0012$)} \\
    \textit{Growth SKU}  & \multicolumn{3}{c}{$-0.0079^{***}$ ($0.0013$)} \\
    \textit{Top SKU} $\cdot$ \textit{Top Shelf}  & \multicolumn{3}{c}{$-0.0235^{***}$ ($0.0029$)} \\
    \textit{Top SKU}  & \multicolumn{3}{c}{$0.0488^{***}$ ($0.0045$)} \\
    \textit{Top Shelf}  & \multicolumn{3}{c}{$0.0052^{**}$ ($0.0026$)} \\
    \textit{Tenure}  & \multicolumn{3}{c}{$-0.0253$ ($0.5195$)} \\
    
    (Worker-Machine), SKU, (City-Date) FE  &\multicolumn{3}{c}{Yes} \\
    R$^2$    & \multicolumn{3}{c}{$0.3555$}     \\
    F-stat   & \multicolumn{3}{c}{$10,729.3^{***}$} \\
    Observations   &\multicolumn{3}{c}{$5,189,189$} \\
    \\
    \multicolumn{4}{l}{\textit{Second Stage (LATE)}} \\
    $\widehat{DownOverride}_{ijkt}$ & $-0.1423^{*}$ ($0.0805$) & $-0.1101^{***}$ ($0.0262$) & $-1.4750^{***}$ ($0.3213$) \\
    
    \textit{Discounted} & $-0.0009$ ($0.0070$) & $0.0085^{***}$ ($0.0011$) & $0.9187^{***}$ ($0.0363$) \\
    \textit{Sale Price}  & $-0.1102^{***}$ ($0.0058$) & $-0.0098^{***}$ ($0.0005$) & $-0.4933^{***}$ ($0.0202$) \\
    \textit{Danger Flag}  & $-0.0080$ ($0.0049$) & $0.0071^{***}$ ($0.0010$) & $0.7508^{***}$ ($0.0243$) \\
    \textit{Growth SKU}  & $-0.1081^{***}$ ($0.0087$) & $-0.0117^{***}$ ($0.0014$) & $-0.6403^{***}$ ($0.0281$) \\
    \textit{Top SKU} $\cdot$ \textit{Top Shelf} & $0.1757^{***}$ ($0.0109$) & $-0.0607^{***}$ ($0.0020$) & $-0.7497^{***}$ ($0.0390$) \\
    \textit{Top SKU}  & $0.3178^{***}$ ($0.0075$) & $0.0919^{***}$ ($0.0021$) & $3.8015^{***}$ ($0.0423$) \\
    \textit{Top Shelf}  & $-0.0890^{***}$ ($0.0084$) & $0.0461^{***}$ ($0.0020$) & $0.2397^{***}$ ($0.0297$) \\
    \textit{Tenure} & $-0.0805$ ($0.6144$) & $-0.3038$ ($0.3209$) & $4.3243$ ($1.7150$) \\
    
    (Worker-Machine), SKU, (City-Date) FE   & Yes & Yes & Yes \\
    R$^2$ & $0.2943$ & $0.3621$ & $0.4102$   \\
    Observations                    & $5,189,189$ & $5,189,189$ & $5,189,189$   \\
    
    \bottomrule\\[-0.8ex]
    \multicolumn{4}{l}{\footnotesize Robust standard errors clustered at the worker level.} \\
    \multicolumn{4}{l}{\footnotesize $^{*}p<0.1$; $^{**}p<0.05$; $^{***}p<0.01$}
    \end{tabular}
\end{table}

\section{Additional Information for Section~\ref{section:betteroverride}}
\label{appendix:better override}

To learn how the constrained override group selects SKUs for downward overrides, we develop a logistic regression model at the instance level to estimate the probability that an SKU is selected by the constrained override group from SKUs intended to be overridden without constraints. However, the SKUs that the constrained override group would likely override in the absence of constraints are unobservable. We therefore use the SKUs selected by the free override group as a proxy because the constrained override and free override groups are statistically identical due to randomization in the experimental design. The training data consists of all observed downward overrides from constrained and free override groups, indexed by $(i,j,k,t)$ for worker, machine, SKU, and time, with instances $(x_{ijkt}, y_{ijkt})$. The feature vector $x_{ijkt}$ captures worker, machine, and SKU characteristics, and the label $y_{ijkt}$ indicates the group: $y_{ijkt}=1$ for constrained-group overrides and $y_{ijkt}=0$ for free-group overrides.

We employ features that could potentially influence the override selection process. First, we include features displayed in the app (Figure~\ref{fig: screenshots for different groups}) when local workers make override decisions, such as the current inventory level, weekly sales, and the suggested replenishment quantity of the SKU. These features are the most direct messages that local workers receive when making override decisions. Additionally, we include the actual replenishment quantity of the SKU to capture the magnitude of override decisions. Second, we incorporate features that describe basic information about local workers, vending machines, and SKUs. We selected features in collaboration with Fengyi Technology based on their business insights regarding factors that may influence override decisions. Examples include the tenure of local workers, the type of environment where smart vending machines are located, top-sales vending machines or SKUs, danger flags for SKUs at risk of removal due to low sales, and whether SKUs are categorized as growth-stage SKUs. All features used in the regression are listed in Table~\ref{table: feature explanation}, along with detailed explanations. Figure~\ref{fig: feature importance} presents the importance scores for the top $15$ most essential features that correlate with private information, such as the top-sales SKU and location of the smart vending machines.

\begin{table}[!htbp]
    \footnotesize
    \TABLE
    {Feature Explanation \label{table: feature explanation}}
    {\begin{tabular}{@{}l@{\quad}c@{}}
    
    \toprule[1pt]\\[-2.5ex]
    
    Feature    &    Explanation    \\
    
    \midrule
    
    \textbf{Important Information}    &    \\
    \textit{Weekly Sales}    &    Unit: item    \\
    \textit{Current Inventory Level}    &    Unit: item    \\
    \textit{Suggested Replenishment Quantity}    &    Suggested quantity from the centralized server    \\ 
    \textit{Actual Replenishment Quantity}    &    Actual quantity proposed by the local worker    \\
    
    \midrule
    
    \textbf{Local Worker Characteristics}    &    \\
    \textit{Tenure}    &    Calendar year since joining Fengyi Technology    \\ 
    
    \midrule
    
    \textbf{Machine Characteristics}    &    \\
    \textit{Business Name}    &    Name of province or important city, e.g., Shanghai  
      \\
    \textit{Environment}    &    Environment near vending machines, e.g., CBD area inside the building    \\
    \textit{Industry Category}    &    Industry category where the vending machine is located, e.g., Education 
       \\
    \textit{Top-Sales Machine}   &    $1$ for top-sales vending machines; $0$ for others    \\
    \textit{Longitude}    &    Longitude of vending machines    \\
    \textit{Latitude}    &    Latitude of vending machines    \\
    
    \midrule
    
    \textbf{SKU Characteristics}  &   \\
    \textit{Category}    &    Category of SKU    \\
    \textit{Top-Sales SKU}    &    $1$ for top-sales SKUs; $0$ for others   \\
    \textit{Purchase Price}    &    Unit: CNY    \\ 
    \textit{Sale Price}   &    Unit: CNY    \\
    \textit{Danger Flag}   &      Risk of the removal due to low sales      \\
    \textit{On Sales}    &     $1$ for SKUs on sales; $0$ for others     \\
    \textit{Product Life Cycle}    &    Days since the first fill time     \\
    \textit{New Flag}    &    $1$ for Growth-stage SKUs; $0$ for mature-stage SKUs    \\
    \textit{Remaining Lifetime}    &    Longest days until product perishes   \\
    \textit{Allow Machine Days}    &    Longest days that the product can be positioned on the vending machine    \\
    \textit{Allow Activity Days}    &    Longest days with promotion    \\
    \textit{Product Length}   &    Unit: cm       \\
    \textit{Product Width}    &    Unit: cm       \\
    \textit{Product Height}   &    Unit: cm       \\
    
    \bottomrule[1pt]\\ 
    
    \end{tabular}}
    {}
\end{table}

\begin{figure}[t]
    \FIGURE
    {\includegraphics[width=0.7\linewidth]{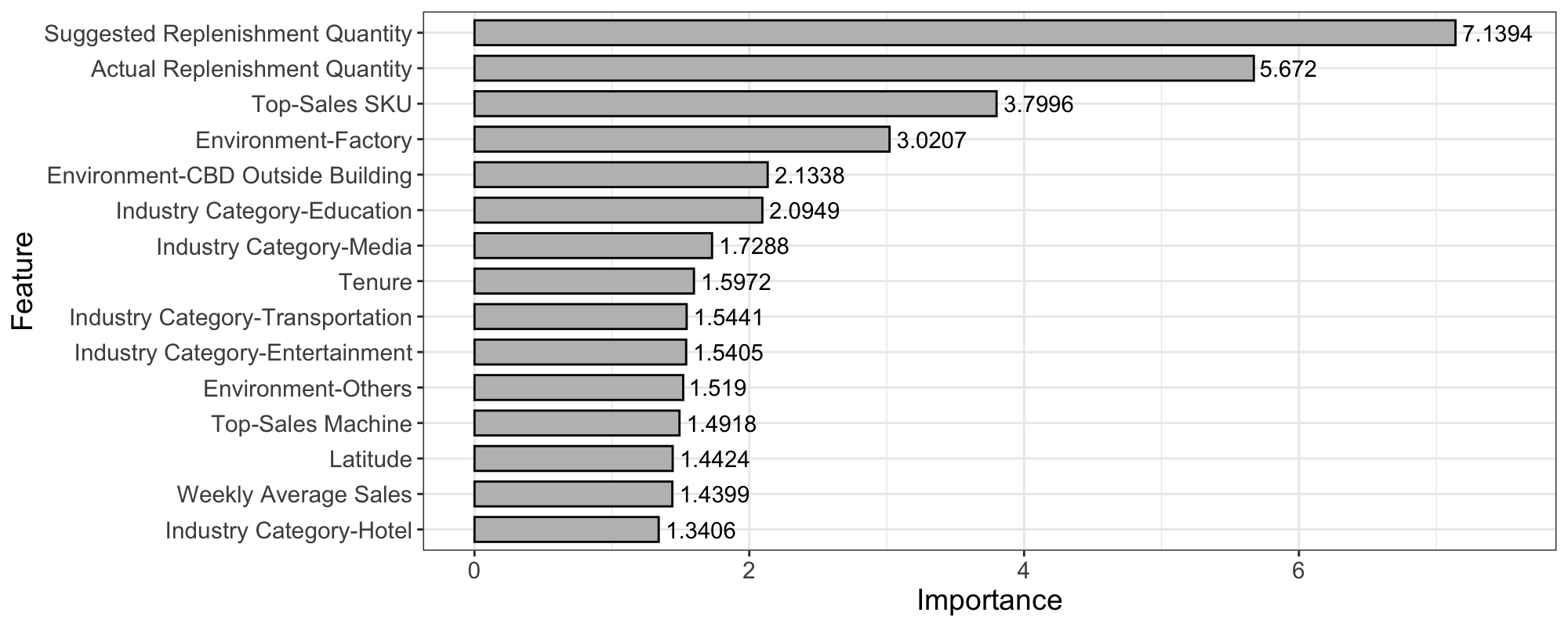}}
    {Feature Importance \label{fig: feature importance}}
    {\footnotesize{We calculate the absolute value of the standardized coefficient for each feature as its importance value, which quantifies the contribution of each feature to the predictions made by the logistic regression model. A detailed explanation of each feature is provided in Table~\ref{table: feature explanation}.}}
\end{figure}

Leveraging the trained logistic regression, we implement an instance-level ``Accept or Reject" policy for candidate downward overrides. The key idea is to keep proposed overrides that are more likely to be selected by the constrained override group. We outline a hypothetical scenario in which local workers make override decisions under the refined policy. Before replenishing, local workers receive a list of SKUs with suggested replenishment quantities. They identify a set of SKUs to override and submit for approval by the centralized server. The centralized server then computes $\hat{p}_{ijkt}$ for each SKU, the estimated probability that would be selected by the constrained override group, and sets a threshold $\alpha \in (0,1)$: overrides with $\hat{p}_{ijkt} \ge \alpha$ are accepted, whereas those with $\hat{p}_{ijkt} < \alpha$ are rejected. This selection process personalizes the cardinality constraints across different workers and times.

We simulate the hypothetical scenario using the randomized experimental data. Because workers in the free override group faced no downward constraint, their observed overrides serve as the pool of proposed overrides to the centralized server. Using the trained logistic regression, we compute $\hat{p}_{ijkt}$ for each override instance and retain those with $\hat{p}_{ijkt} \ge 0.63$. We set the threshold $\alpha=0.63$ so that the mean number of decreased SKUs is approximately equal in the constrained override and personalized override groups, thereby focusing on distributional differences. This procedure generates a new group of data, referred to as the personalized override group ($n=100$), which applies customized constraints tailored to each instance. We aim to compare performance between the constrained override group (fixed constraints) and the personalized override group (personalized constraints). Random assignment in the experimental design ensures the groups are statistically identical before override decisions, supporting a fair comparison.

\begin{figure}[htbp]
    \FIGURE
    {\includegraphics[width=0.7\linewidth]{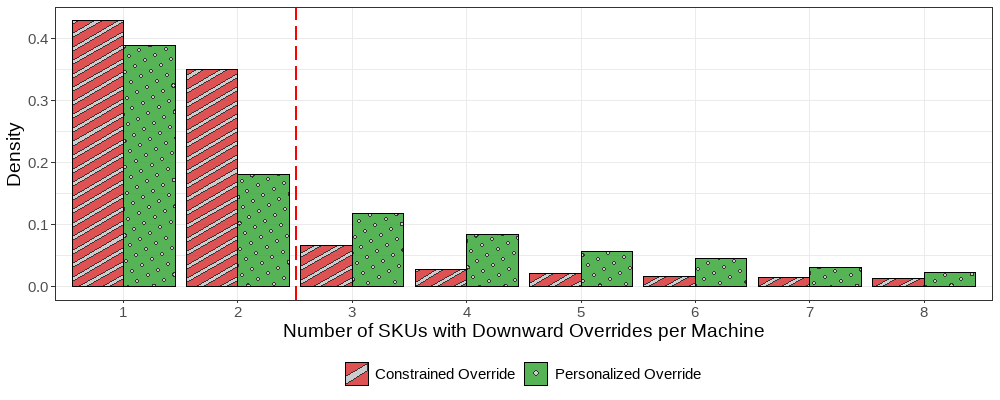}}
    {Personalization \label{fig: personalization}}
    {The histogram excludes shelves with no (0) downward overrides.}
\end{figure}

Figure~\ref{fig: personalization} presents the distributions of the average number of decreased SKUs for the two groups. The red dashed line marks the threshold of $2$ SKUs per machine, representing the fixed constraint imposed on the constrained override group. As expected, the histogram for the constrained override group shows a sharp decline beginning at $3$ SKUs. In contrast, the histogram for the personalized override group exhibits a heavier tail beyond $2$ SKUs. The difference confirms the design of the refined solution, where personalized constraints are applied to different instances, allowing local workers with stronger private information to override more than $2$ SKUs per machine.

Then, we analyze the performance difference between the constrained and personalized override groups, focusing on sales probability at the override instance level. Figure~\ref{fig: sales probability} draws the distributions of averaged sales probability among vending machines, SKUs, and time for the two groups. The distribution of the personalized override group shifts to the right compared to the constrained override group. We also conduct a $t$-test to estimate the differences in sales probability at the worker level. The personalized override group exhibits an increase in sales probability by $9.1$ percentage point ($p$-value $<0.01$). These findings validate the effectiveness of the refined solution, which learns from SKU selection under the constrained override policy, thereby further supporting our explanation in Section~\ref{subsection: mechanism}.

\begin{figure}[htbp]
    \FIGURE
    {\includegraphics[width=0.7\linewidth]{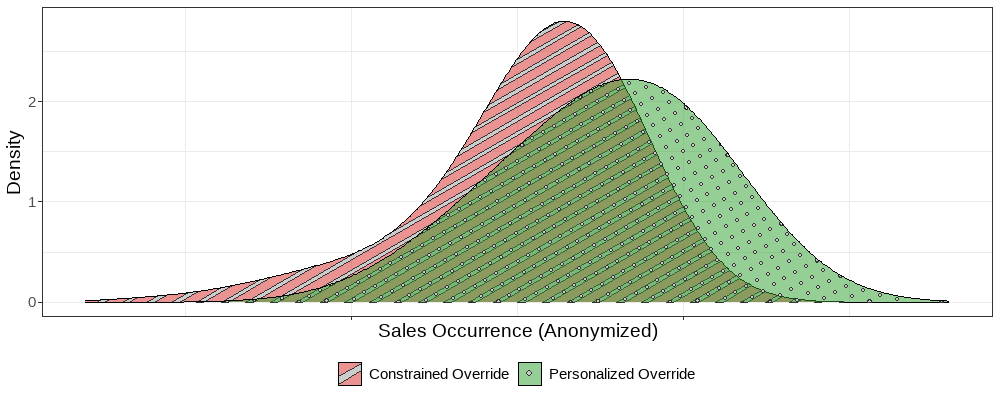}}
    {(Color Online) Sales Probability for Constrained and Personalized Override Groups \label{fig: sales probability}}
    {}
\end{figure}

\end{APPENDICES}

\end{document}